\documentclass[11pt]{amsart}

 
\setlength{\paperwidth}{8.5in}
\setlength{\paperheight}{11in}
\setlength{\marginparwidth}{0in}
\setlength{\marginparsep}{0in}
\setlength{\oddsidemargin}{0in}
\setlength{\evensidemargin}{0in}
\setlength{\textwidth}{6.5in}
\setlength{\topmargin}{-0.5in}
\setlength{\textheight}{9in}


\usepackage[english]{babel}
\usepackage{amsmath,amssymb,amsthm}
\usepackage[pdftex]{graphicx,color}


\newtheorem{theorem}{Theorem}[section]

\newtheorem{lemma}[theorem]{Lemma}
\newtheorem{corollary}[theorem]{Corollary}

\theoremstyle{definition}
\newtheorem{defn}[theorem]{Definition}
\newtheorem{example}[theorem]{Example}


\newcommand{\G}{\mathcal{G}}
\newcommand{\F}{\mathcal{F}}
\newcommand{\A}{\mathcal{A}}
\newcommand{\B}{\mathcal{B}}
\renewcommand{\H}{\mathcal{H}}
\newcommand{\T}{\mathcal T}                    
\newcommand{\E}{\mathbb{E}}
\newcommand{\C}{\mathbb{C}}
\newcommand{\QCE}[3]{\E_{#1}\left[ {#2} \,|\, {#3} \right]} 
\newcommand{\QE}[2]{\E_{#1}\left[ {#2}\right]} 
\newcommand{\ch}[1]{\chi_{#1}}
\newcommand{\tr}{ \operatorname{Tr} } 
\newcommand{\ran}{ \operatorname{Ran}} 
\newcommand{\state}[1]{\mathcal{S}(#1)}
\newcommand{\eff}[1]{\operatorname{Eff}(#1)}
\newcommand{\ac}{ \ll_{\rm ac}}
\newcommand{\dd}{\mathrm{d}}
\renewcommand{\d}{\, \mathrm{d}}
\newcommand{\borel}[1]{\mathcal{O}(#1)}

\newcommand{\geo}{\#}
\newcommand{\eps}{\varepsilon}
\newcommand{\ds}{\displaystyle}

\newcommand{\BH}{\B(\H)}
\newcommand{\SH}{S(\H)}


\begin{document}

\title{On a quantum martingale convergence theorem}

\author{Kyler S.~Johnson}
\address{\color{black}Department of Mathematics and Statistics, University of Regina\\ Regina, SK S4S 0A2 Canada\color{black}\\
\color{black}johnskyl@uregina.ca\color{black}}

\author{Michael J.~Kozdron}
\address{Department of Mathematics and Statistics, University of Regina\\ Regina, SK S4S 0A2 Canada\\
michael.kozdron@uregina.ca}

\begin{abstract} 
It is well-known in quantum information theory that a positive operator valued measure (POVM) is the most general kind of quantum measurement. 
Mathematically, a quantum probability is a normalised POVM, namely
a function on certain subsets of a (locally compact and Hausdorff) sample space that satisfies the formal requirements for a probability measure and whose values are positive operators acting on a complex Hilbert space. A quantum random variable is an operator valued function which is measurable with respect to a quantum probability.
In the present work, we study quantum random variables and generalize several classical limit results to the quantum setting.  We prove a quantum analogue of the Lebesgue dominated convergence theorem and use it to prove a quantum martingale convergence theorem.  This quantum martingale convergence theorem is of particular interest since it exhibits non-classical behaviour; even though the limit of the martingale exists and is unique, it is not explicitly identifiable. However, we provide a partial classification of the limit through a study of the space of all quantum random variables having quantum expectation zero.  
\end{abstract}

\keywords{positive operator valued measure; quantum probability space; quantum martingale convergence theorem; noncommutative probability}


\maketitle 


\section{Introduction}

Consider a quantum system possessing various physical properties. Using an experimental apparatus, some of these properties, known as observables, can actually be measured. To formulate the mathematics of quantum measurement, we model the states of the quantum system by density operators $\rho$ acting on a $d$-dimensional Hilbert space $\H$ and an observable by a hermitian operator. The experimental apparatus is represented by a positive operator valued measure (POVM). We can formally consider the experimental apparatus as a quantum probability $\nu$ (i.e., a positive operator valued \emph{probability} measure) acting on $(X, \F(X))$ satisfying $\nu(X)= 1$, the identity operator on $\H$, where $X$ denotes the sample space of possible outcomes of the measurement and $\F(X)$ is a suitable $\sigma$-algebra of events.

In practice, we usually take $X$ to be a finite set and $\F(X)$ to be the power set of $X$. Herein, we work in more generality by taking $X$ to be a locally compact Hausdorff space and $\F(X)$ to be a $\sigma$-algebra containing the Borel sets of X. The statistical/random facet of quantum measurement is captured by the following axiom: if, at the moment of the measurement, the system is in state $\rho$,  the probability that the event $E \in \F(X)$ will be measured is $\tr(\rho \nu(E))$.

While studying classical and non-classical convexity properties of the space of positive operator valued measures on  $(X,\F(X))$  with values in $\B(\H)$, 
the algebra of linear operators acting on $\H$, a transform was introduced in Ref.~\cite{farenick--plosker--smith2011} that associates any positive operator valued measure $\nu$ with a certain completely positive linear map $\Gamma(\nu)$ of the homogeneous C*-algebra $C(X) \otimes \B(\H)$ into $\B(\H)$. 

This association was achieved by using an operator valued integral in which operator valued functions are integrated with respect to positive operator valued measures and which has the feature that the integral of a random quantum effect is itself a quantum effect.

In Ref.~\cite{farenick--kozdron2012}, a better mathematical understanding of quantum probability was proposed through the introduction of a quantum analogue for the expected value $\QE{\nu}{\psi}$ of a quantum random variable $\psi$ relative to a quantum probability measure $\nu$ using the operator valued integral of Ref.~\cite{farenick--plosker--smith2011}. This led to theorems for a change of quantum measure and a change of quantum variables. Also introduced was a quantum conditional expectation which resulted in quantum versions of some standard identities for Radon-Nikod\'ym derivatives, and led to the formulation and proof of a quantum analogue of Bayes' rule.

It is a basic fact of functional analysis that if $\psi:X\rightarrow\C$ is an essentially bounded function on a probability space $(X,\F(X), \mu)$, then the
essential range of $\psi$ is precisely the spectrum of $\psi$, where one considers $\psi$ as an element of the
von Neumann algebra $L^\infty(X,\mu)$. Recently, a similar result for essentially bounded quantum random variables on quantum probability spaces using higher dimensional spectra was found; see~\cite{FKP} This investigation of quantum variance also involved notions from spectral theory, and it was discovered that the quantum moment problem admits a characterisation entirely within spectral terms.

In the present work, we build on these earlier results by considering for the first time limiting operations for sequences of quantum random variables and quantum probability measures including a quantum analogue of the Lebesgue dominated convergence theorem and a discrete Fubini-type theorem.  As in those earlier investigations, the noncommutativity of operator algebra leads to some structure that simply does not appear in the classical setting. Using the quantum conditional expectation of Ref.~\cite{farenick--kozdron2012}, we also establish a quantum martingale convergence theorem for quantum martingales obtained by conditioning on a fixed quantum random variable. This theorem is of particular interest since it strongly exhibits non-classical behaviour; even though the limit of the martingale exists and is unique, it is not identifiable. However, we provide a partial classification of the limit through a study of the space of quantum random variables having quantum expectation zero.   The outline of the paper is as follows. In Section~\ref{Introsect} we introduce our notation and summarize the relevant results of Refs.~\cite{farenick--kozdron2012, farenick--plosker--smith2011}, and~\cite{FKP}. We provide our first limiting results in Section~\ref{QEsect} and then study quantum random variables having quantum expectation zero in Section~\ref{MeanZerosect}.  Finally, in Section~\ref{MCTsect} we develop our quantum martingale convergence theorem.

It is worth mentioning two related papers. 
In the present paper, we generalize the Lebesgue dominated convergence theorem from the classical setting to the quantum setting and apply it to prove a quantum analogue of the martingale convergence theorem. Using somewhat related techniques, Ref.~\cite{PR19}  generalizes Lyapunov's convexity theorem for classical (scalar-valued) measures to quantum (operator-valued) measures.  And in Ref.~\cite{MPR20}, the object of study is 
positive operator valued measures whose image is the bounded operators acting on an infinite-dimensional Hilbert space and, when possible, the usual assumption of positivity of the operator valued measure is relaxed. The literature for such POVMs on infinite-dimensional Hilbert spaces is not as mature as the case of a finite dimensional Hilbert spaces and, consequently, it is currently unknown whether the present results could be extended to the infinite-dimensional setting.

\section{Notation and background results}\label{Introsect}

We will always write $\H$ for a $d$-dimensional Hilbert space, $\B(\H)$ for the
C$^*$-algebra of linear operators acting on $\H$, and  $\B(\H)_+$ for the cone of positive operators.
The predual of $\B(\H)$ is denoted by $\T(\H)$, the space of trace-class operators. 
Since $\H$ is finite dimensional, $\B(\H)$ and $\T(\H)$
coincide as sets. Finally, $X$ shall denote a locally compact Hausdorff space
and $\F(X)$ a $\sigma$-algebra of subsets of $X$ containing the Borel sets. In particular, $\borel{X}$, the  Borel sets of $X$, is itself a $\sigma$-algebra of interest.  
A density operator, or state, on $\H$ is a positive trace-class operator $\rho$ such that $\tr(\rho)=1$; the set
of all density operators is denoted by $\state{\H}$.
By a quantum effect we mean a positive operator $h \in \B(\H)_+$ with the property that every eigenvalue $\lambda$ of $h$ satisfies $0 \le \lambda \le 1$, and we let $\eff{\H}$ denote the set of quantum effects.  Note that every
state $\rho \in \state{\H}$ is also a quantum effect.
A set function $\nu:\F(X)\rightarrow\B(\H)$ is called a positive operator valued measure (POVM) if
\begin{itemize}
\item[(i)] \color{black}
$\nu(E) \in\B(\H)_+$ for every $E \in \F(X)$,
\color{black}
\item[(ii)] $\nu(X) \neq 0$, and
\item[(iii)] for every countable collection $\{E_k\}_{k=1}^\infty \subseteq \F(X)$ with $E_j \cap E_k = \emptyset$ for $j \neq k$ we have
\begin{equation}\label{sum}
\nu\left(\bigcup_{k=1}^\infty E_k \right) = \sum_{k =1}^\infty \nu(E_k).
\end{equation}
\end{itemize}
If, in addition, $\nu(X)=1\in\B(\H)$, then $\nu$ is called a quantum probability measure. 
The convergence in~\eqref{sum} above is normally assumed to be with respect to the ultraweak topology; however, 
because $\B(\H)$ has finite dimension, the convergence in~\eqref{sum} may be taken with respect to any of the usual operator topologies on
$\B(\H)$. The POVM 
$\nu:\F(X)\rightarrow\B(\H)$ induces the classical (i.e., scalar valued) measure
$\mu$ via $\mu=(1/d)\tr\circ \nu$, where $\tr$ is the canonical trace on $\B(\H)$. Note that if $\nu$ is a quantum probability measure, then $\mu$ is a classical probability measure.
We call the triple $(X, \F(X), \nu)$ a quantum probability space.

A function $\psi:X\rightarrow \B(\H)$ is said to be measurable (i.e., a quantum random variable)
if, for every pair $\xi$, $\eta\in\H$, the complex valued function $x\mapsto\langle\psi(x)\xi,\eta\rangle$
is measurable (i.e., a  random variable) in the classical sense.  In fact, it is known~\cite{FKP} that 
 $\psi:X\rightarrow\B(\H)$  is  measurable if and only if $\psi^{-1}(U)$ is a measurable set, for every open set $U\subseteq\B(\H)$.

The predual of the von Neumann algebra 
$L^\infty(X,\mu)\,\overline\otimes\,\B(\H)$ is given by $L^1_{\T(\H)}(X,\mu)$;  see Theorem~IV.7.17 of Ref.~\cite{Takesaki-bookI}.
In particular, if
$\Psi\in L^\infty(X,\mu)\,\overline\otimes\,\B(\H)$, then there is a bounded quantum random 
variable $\psi:X\rightarrow\B(\H)$ such that, for each $f \in L^1_{\T(\H)}(X,\mu)$, the complex number
$\Psi(f)$ is given by
\[
\Psi(f) = \frac{1}{d} \int_X  \tr\left(f(x)\psi(x)\right) \d \mu(x).
\]
Although $\psi$ is not unique, it is unique up to a set of $\mu$-measure zero.
We therefore identify $\Psi$ and $\psi$ and consider the elements 
of $L^\infty(X,\mu)\,\overline\otimes\,\B(\H)$ to be bounded quantum random variables $\psi:X\rightarrow\B(\H)$.
Note that $ L^\infty(X,\mu)\,\overline\otimes\,\B(\H) \cong L^\infty(X,\mu)\otimes M_d(\C)$
where $M_d(\C)$ is the space of $d\times d$ matrices over $\C$.

We end this section by stating a number of theorems and definitions from Refs.~\cite{farenick--kozdron2012, farenick--plosker--smith2011}, and~\cite{FKP} relevant for our purposes.
Recall that if $\nu_1$ and $\nu_2$ are both positive operator valued measures on $(X,\F(X))$, then $\nu_2$ is absolutely continuous with respect to $\nu_1$, written
$\nu_2 \ac \nu_1$, if $\nu_2(E) = 0$ for every $E \in \F(X)$ with $\nu_1(E) = 0$.  Furthermore, if $\mu$ is a classical  measure, then we can always view $\mu$ as the scalar valued POVM $\mu \cdot 1$.

\begin{theorem}
If $\nu$ is a POVM on $(X,\F(X))$, then $\nu$ is absolutely continuous with respect to the induced classical 
measure $\mu$, and there exists an $\F(X)$-measurable function
 $\displaystyle \frac{\dd\nu}{\dd\mu}$ such that 
\begin{equation}\label{rn defn}
\displaystyle\int_E\left\langle\displaystyle\frac{\dd\nu}{\dd\mu}(x)\xi,\xi \right\rangle\d\mu(x)  = 
\langle \nu(E)\xi,\xi \rangle ,
\end{equation}
for all $E\in \F(X)$ and all $\xi\in \H$.
The function $\displaystyle\frac{\dd\nu}{\dd\mu}$  is
called the \emph{principal Radon-Nikod\'ym derivative of $\nu$} and is a positive operator for
$\mu$-almost all $x\in X$.    
\end{theorem}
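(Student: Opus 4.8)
The plan is to reduce the statement to the classical Radon--Nikod\'ym theorem applied coordinatewise in a fixed orthonormal basis of $\H$, and then to verify that the resulting matrix-valued function is measurable, positive $\mu$-almost everywhere, and satisfies \eqref{rn defn}. Absolute continuity $\nu \ac \mu$ is immediate: if $E \in \F(X)$ has $\mu(E) = 0$, then $\tr(\nu(E)) = d\,\mu(E) = 0$, and since $\nu(E) \in \B(\H)_+$ is a positive operator of trace zero, it vanishes; hence $\nu(E) = 0$.

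For the construction, fix an orthonormal basis $e_1,\dots,e_d$ of $\H$ and, for each pair $(j,k)$, set $\nu_{jk}(E) = \langle \nu(E) e_k, e_j\rangle$. Countable additivity of $\nu$ (in, say, the weak operator topology) makes each $\nu_{jk}$ a finite complex measure on $(X,\F(X))$; decomposing it into its Hahn--Jordan pieces and using $\nu_{jk} \ac \mu$, the classical Radon--Nikod\'ym theorem supplies an $\F(X)$-measurable $g_{jk} \in L^1(X,\mu)$ with $\nu_{jk}(E) = \int_E g_{jk}\d\mu$ for every $E$. Define $\frac{\dd\nu}{\dd\mu}(x)$ to be the operator on $\H$ whose matrix in the basis $(e_j)$ is $[g_{jk}(x)]$. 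Since each entry is measurable, $x \mapsto \langle \frac{\dd\nu}{\dd\mu}(x)\xi,\eta\rangle$ is measurable for every $\xi,\eta \in \H$, so $\frac{\dd\nu}{\dd\mu}$ is a quantum random variable, and by construction $\int_E \frac{\dd\nu}{\dd\mu}\d\mu = \nu(E)$ entrywise. Taking $\xi = e_j$ and extending by sesquilinearity and the polarization identity yields \eqref{rn defn} for all $\xi \in \H$.

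For positivity, note that for fixed $\xi \in \H$ the measure $E \mapsto \langle \nu(E)\xi,\xi\rangle$ is nonnegative because $\nu(E) \in \eff{\H} \subseteq \B(\H)_+$, so its density $x \mapsto \langle \frac{\dd\nu}{\dd\mu}(x)\xi,\xi\rangle$ is $\ge 0$ for $\mu$-almost every $x$; similarly self-adjointness of each $\nu(E)$ forces $g_{jk} = \overline{g_{kj}}$ $\mu$-almost everywhere, so $\frac{\dd\nu}{\dd\mu}(x)$ is Hermitian off a $\mu$-null set. Intersecting the countably many exceptional null sets over a countable dense subset of $\H$ and invoking continuity of $\xi \mapsto \langle \frac{\dd\nu}{\dd\mu}(x)\xi,\xi\rangle$, we obtain a set of full $\mu$-measure on which $\frac{\dd\nu}{\dd\mu}(x) \in \B(\H)_+$; redefining $\frac{\dd\nu}{\dd\mu}$ to be $0$ on the complement leaves \eqref{rn defn} intact. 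Uniqueness up to $\mu$-almost everywhere equality is inherited from that of each scalar density $g_{jk}$.

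The delicate point is this last one: positivity of $\langle \nu(E)\xi,\xi\rangle$ is available only for each \emph{fixed} $\xi$, with an exceptional null set that a priori depends on $\xi$, so one needs a separability input to pass to pointwise positivity of the operator $\frac{\dd\nu}{\dd\mu}(x)$. Here the finite-dimensionality of $\H$ makes the countable-dense-set argument routine, but it is precisely the step where care is required; everything else is bookkeeping on top of the classical Radon--Nikod\'ym theorem.
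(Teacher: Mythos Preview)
Your argument is correct. Note, however, that the paper does not supply its own proof of this theorem: it is stated in Section~\ref{Introsect} as background material drawn from~\cite{farenick--kozdron2012, FKP, farenick--plosker--smith2011}, so there is no in-paper proof to compare against. That said, the route you take---fix an orthonormal basis, apply the scalar Radon--Nikod\'ym theorem to the complex measures $\nu_{jk}(E)=\langle\nu(E)e_k,e_j\rangle$, reassemble the densities into a matrix-valued function, and upgrade pointwise positivity from a countable dense set of vectors to all of $\H$---is exactly the standard one for operator-valued Radon--Nikod\'ym derivatives in finite dimensions, and your handling of the one genuinely nontrivial step (the $\xi$-dependent null sets in the positivity argument) is correct.
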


\begin{defn}\label{nuintdefn}
A measurable function $\psi:X \to \B(\H)$ is $\nu$-integrable if for every density operator $\rho$ the complex valued function
\[
\psi_\rho(x) = \tr\left(\rho \left(\displaystyle\frac{\dd\nu}{\dd\mu}(x)\right)^{1/2}\psi(x)\left(\displaystyle\frac{\dd\nu}{\dd\mu}(x)\right)^{1/2}\right), \;x\in X,
\]
is $\mu$-integrable.
The integral of a $\nu$-integrable function $\psi:X\rightarrow\B(\H)$
is defined to be the unique operator acting on $\H$ having the property that
\[
\tr\left(\rho\int_X\psi\d\nu\right) = \int_X \psi_\rho\d\mu ,
\]
for every density operator $\rho$.
\end{defn}

\begin{theorem}
If $\nu_1$, $\nu_2$ are POVMs on $(X,\F(X))$, then  $\nu_2 \ac \nu_1$ if and only if there exists a bounded $\nu_1$-integrable $\F(X)$-measurable function  $\displaystyle \frac{\dd\nu_2}{\dd\nu_1}$, unique up to sets of $\nu_1$-measure zero, such that
$$\nu_2(E) = \int_E\frac{\dd\nu_2}{\dd\nu_1} \d \nu_1$$
for every $E \in \F(X)$. 
Moreover,
\[
\frac{\dd\nu_2}{\dd\nu_1}= \left(\frac{\dd\mu_2}{\dd\mu_1}\right)
\left[ 
\left(\frac{\dd\nu_1}{\dd\mu_1}\right)^{-1/2}\left(\frac{\dd\nu_2}{\dd\mu_2}\right)\left(\frac{\dd\nu_1}{\dd\mu_1}\right)^{-1/2}
\right]
\]
and is called the \emph{non-principal Radon-Nikod\'ym derivative of $\nu_2$ with respect to $\nu_1$}.
\end{theorem}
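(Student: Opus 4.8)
The plan is to establish the equivalence one implication at a time, with all the content in the converse; the displayed formula will appear as the explicit candidate for $\frac{\dd\nu_2}{\dd\nu_1}$. For the direction ``existence of the derivative $\Rightarrow$ $\nu_2\ac\nu_1$'', suppose $G:=\frac{\dd\nu_2}{\dd\nu_1}$ is a bounded $\nu_1$-integrable function with $\nu_2(E)=\int_E G\d\nu_1$ for every $E$. If $\nu_1(E)=0$, then, since $\mu_1=\frac1d\tr\circ\nu_1$ and the trace of a positive operator vanishes only for the zero operator, $\mu_1(E)=0$; Definition~\ref{nuintdefn} then gives $\tr(\rho\,\nu_2(E))=\int_E G_\rho\d\mu_1=0$ for every density operator $\rho$, whence $\nu_2(E)=0$. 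So $\nu_2\ac\nu_1$.

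For the converse, assume $\nu_2\ac\nu_1$. The first step is to descend to the induced scalar measures: the same observation ($\tr T=0\iff T=0$ for $T\ge0$) gives $\mu_1(E)=0\Rightarrow\nu_1(E)=0\Rightarrow\nu_2(E)=0\Rightarrow\mu_2(E)=0$, so $\mu_2\ac\mu_1$ and the classical Radon-Nikod\'ym derivative $\frac{\dd\mu_2}{\dd\mu_1}$ exists. Next, by the principal Radon-Nikod\'ym theorem the operator densities $A_i:=\frac{\dd\nu_i}{\dd\mu_i}$ exist and are positive $\mu_i$-almost everywhere, with $\tr A_i=d$ a.e.\ (so also $0\le A_i\le d\cdot 1$). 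Writing $g:=\frac{\dd\mu_2}{\dd\mu_1}$, I would take as candidate the function $G:=g\,A_1^{-1/2}A_2A_1^{-1/2}$ of the displayed formula, first checking that it is $\F(X)$-measurable (square roots, inverses, and products of measurable operator-valued functions remain measurable) and $\nu_1$-integrable.

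The verification that $\nu_2(E)=\int_E G\d\nu_1$ is where the operator-valued integral does the work. Fixing $E\in\F(X)$ and a density operator $\rho$, Definition~\ref{nuintdefn} rewrites $\tr(\rho\int_E G\d\nu_1)$ as $\int_E\tr\bigl(\rho\,A_1^{1/2}GA_1^{1/2}\bigr)\d\mu_1$; the point of the formula is the cancellation $A_1^{1/2}\bigl(A_1^{-1/2}A_2A_1^{-1/2}\bigr)A_1^{1/2}=A_2$, which collapses the integrand to $g\,\tr(\rho A_2)$, so the quantity equals $\int_E g\,\tr(\rho A_2)\d\mu_1=\int_E\tr(\rho A_2)\d\mu_2$ by the classical change of measure, and then $\tr(\rho\,\nu_2(E))$ by the defining property~\eqref{rn defn} of $A_2=\frac{\dd\nu_2}{\dd\mu_2}$ (recovering the operator from its quadratic form via polarisation, or by decomposing $\rho$ into rank-one projections). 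Since $\rho$ was arbitrary, $\int_E G\d\nu_1=\nu_2(E)$. Uniqueness up to $\nu_1$-null sets follows by running the argument backwards: if two representatives integrate to the same operator over every $E$, their $\rho$-functions agree $\mu_1$-a.e.\ for each $\rho$, and invertibility of $A_1^{1/2}$ promotes this to a.e.\ equality of the operator-valued functions; since a set is $\nu_1$-null precisely when it is $\mu_1$-null, this is the asserted uniqueness.

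I expect the main obstacle to be the step hidden inside the symbol $A_1^{-1/2}$: controlling $A_1=\frac{\dd\nu_1}{\dd\mu_1}$ on the set where it degenerates and confirming that $G$ is nonetheless bounded and $\nu_1$-integrable. This is exactly where $\nu_2\ac\nu_1$ must be exploited beyond its scalar shadow $\mu_2\ac\mu_1$: one needs that $\mu_1$-almost everywhere the range of $A_2(x)$ lies inside that of $A_1(x)$, so that $A_1^{-1/2}A_2A_1^{-1/2}$ is well defined and bounded on the relevant subspace. Getting these support and invertibility facts right --- together with the careful bookkeeping between the $\mu_i$- and $\nu_i$-null ideals --- is the real effort; the algebraic cancellation above and the classical change of measure are then comparatively routine.
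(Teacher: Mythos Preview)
The paper does not prove this theorem: it is quoted in Section~\ref{Introsect} as a background result from~\cite{farenick--kozdron2012} (together with the principal Radon--Nikod\'ym theorem and the definition of $\nu$-integrability), so there is no in-paper proof to compare your proposal against.

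That said, your outline is the natural argument and is essentially the one given in~\cite{farenick--kozdron2012}: reduce to the scalar measures to obtain $\mu_2\ac\mu_1$, set $G=\frac{\dd\mu_2}{\dd\mu_1}\,A_1^{-1/2}A_2A_1^{-1/2}$, and verify the integral identity via the cancellation $A_1^{1/2}GA_1^{1/2}=\frac{\dd\mu_2}{\dd\mu_1}A_2$. You are also right that the real work lies in handling the possible non-invertibility of $A_1(x)=\frac{\dd\nu_1}{\dd\mu_1}(x)$; in the original proof this is where the hypothesis $\nu_2\ac\nu_1$ (and not merely $\mu_2\ac\mu_1$) is used to show that $A_1^{-1/2}A_2A_1^{-1/2}$ makes sense as a bounded operator $\mu_1$-a.e. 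One point to sharpen: the range inclusion $\ran A_2(x)\subseteq\ran A_1(x)$ guarantees that the pseudo-inverse conjugation is \emph{well defined}, but not by itself that $G$ is \emph{bounded}; the boundedness assertion in the theorem also requires control of $\frac{\dd\mu_2}{\dd\mu_1}$ and of the spectrum of $A_1$ on the support of $A_2$, and you should flag that explicitly rather than folding it into ``well defined and bounded''.
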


Recall from Refs.~\cite{KuboAndo} and~\cite{Pusz} that
if $a,b \in \B(\H)_+$ are both invertible, then the geometric mean of $a$ and $b$ is the positive operator $a \geo b$ defined by  
$a \geo b = a^{1/2} (a^{-1/2} ba^{-1/2})^{1/2} a^{1/2}$.
If $a$, $b \in \B(\H)_+$ are non-invertible, then $a \geo b$ is defined by
\[
a \geo b= \lim_{\eps\to0+} (a +\eps 1) \geo (b + \eps 1),
\]
with convergence in the strong operator topology. 
If $\nu_1$ and $\nu_2$ are both quantum probability measures with  $\nu_2 \ac \nu_1$ and  if $\psi: X \to \B(\H)$ is a 
quantum random variable, then we define
\begin{equation}\label{boxtimesdefn}
\psi \boxtimes \frac{\dd \nu_2}{\dd\nu_1}=
 \left(\left(\frac{\dd\nu_1}{\dd\mu_1}\right)^{-1}\geo\frac{\dd \nu_2}{\dd \nu_1} \right)\left(\frac{\dd\nu_1}{\dd\mu_1}\right)^{1/2} \psi \left(\frac{\dd\nu_1}{\dd\mu_1}\right)^{1/2} \left(\left(\frac{\dd\nu_1}{\dd\mu_1}\right)^{-1}\geo\frac{\dd \nu_2}{\dd \nu_1} \right).
 \end{equation}
In particular, 
\begin{equation}\label{boxtimesdefnparticular}
\psi\boxtimes\ds\frac{\dd\nu}{\dd\mu} =\left(\frac{\dd\nu}{\dd\mu}\right)^{1/2}\psi\left(\frac{\dd\nu}{\dd\mu}\right)^{1/2}.
\end{equation}

\begin{defn} 
If  $\nu:\F(X)\rightarrow\B(\H)$ is a quantum probability measure, then the quantum expectation of $\psi$ with respect to $\nu$ is the map 
$\E_{\nu}: L^\infty(X,\mu)\,\overline\otimes\,\B(\H) \rightarrow\B(\H)$ defined by
\[
\QE{\nu}{\psi} = \int_X\psi\d\nu.
\]
\end{defn}

Recall from Chapter~3 of Ref.~\cite{Paulsen-book} that a linear map $\varphi:\A\rightarrow\B$ of unital C$^*$-algebras is a unital completely positive (ucp) 
map if $\varphi(1_\A)=1_\B$ and
the induced linear maps
$\varphi\otimes{\rm id_n}:\A\otimes M_n(\C)\rightarrow\B\otimes M_n(\C)$
are positive for every $n\in\{1,2,\ldots\}$.

\newpage

\color{black}
The following theorem gives one important property of quantum expectation. It is Theorem~2.5 of  Ref.~\cite{farenick--plosker--smith2011} and one of the main results of that paper; see also  Theorem~2.3 of Ref.~\cite{farenick--kozdron2012}.
\color{black}

\begin{theorem}\label{varineq} Quantum expectation is a completely positive operation. That is, the linear map 
$\E_{\nu}: L^\infty(X,\mu)\,\overline\otimes\,\B(\H) \rightarrow\B(\H)$
is a ucp map, for every 
quantum probability measure $\nu$.
\end{theorem}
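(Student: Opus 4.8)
The plan is to reduce $\E_{\nu}$ to an explicit \emph{sandwich-and-integrate} formula and then read off the three properties defining a ucp map. First I would record the concrete description: writing $g=\frac{\dd\nu}{\dd\mu}$ for the principal Radon--Nikod\'ym derivative and recalling that $\psi\boxtimes\frac{\dd\nu}{\dd\mu}=g^{1/2}\psi g^{1/2}$, I would check that the (finite-dimensional, hence elementary) vector-valued integral $\int_X g(x)^{1/2}\psi(x)g(x)^{1/2}\,\d\mu(x)$ satisfies the property characterising $\int_X\psi\,\dd\nu$ in Definition~\ref{nuintdefn}: for every density operator $\rho$,
\[
\tr\left(\rho\int_X g^{1/2}\psi g^{1/2}\,\d\mu\right)=\int_X\tr\bigl(\rho\,g^{1/2}\psi g^{1/2}\bigr)\,\d\mu=\int_X\psi_\rho\,\d\mu .
\]
The integrand is bounded because $\mu=\frac1d\tr\circ\nu$ forces $\tr g(x)=d$ for $\mu$-almost every $x$, so $\|g(x)^{1/2}\|\le\sqrt d$, and $\mu$ is a probability measure. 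By the uniqueness clause of that definition, $\QE{\nu}{\psi}=\int_X g(x)^{1/2}\psi(x)g(x)^{1/2}\,\d\mu(x)$. Linearity of $\E_{\nu}$ is then immediate from linearity of the integral, and unitality follows because \eqref{rn defn} together with polarization gives $\nu(E)=\int_E g\,\d\mu$ for every $E\in\F(X)$, whence $\E_{\nu}(1)=\int_X g\,\d\mu=\nu(X)=1$.

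For complete positivity I would fix $n\in\{1,2,\ldots\}$ and identify $\bigl(L^\infty(X,\mu)\overline\otimes\B(\H)\bigr)\otimes M_n(\C)$ with the bounded quantum random variables $\Psi=[\psi_{ij}]_{i,j=1}^{n}\colon X\to M_n(\B(\H))$, where each $\psi_{ij}\in L^\infty(X,\mu)\overline\otimes\B(\H)$. Since $\E_{\nu}\otimes\mathrm{id}_n$ acts entrywise and $\bigl[g^{1/2}\psi_{ij}g^{1/2}\bigr]_{i,j}=(1_n\otimes g^{1/2})\,[\psi_{ij}]_{i,j}\,(1_n\otimes g^{1/2})$, the formula above yields
\[
(\E_{\nu}\otimes\mathrm{id}_n)(\Psi)=\int_X\bigl(1_n\otimes g(x)^{1/2}\bigr)\,\Psi(x)\,\bigl(1_n\otimes g(x)^{1/2}\bigr)\,\d\mu(x).
\]
If $\Psi\ge 0$, that is, $\Psi(x)\ge0$ in $M_n(\B(\H))$ for $\mu$-almost every $x$, then the integrand has the form $B(x)^{*}\Psi(x)B(x)$ with $B(x)=1_n\otimes g(x)^{1/2}$ self-adjoint, hence is positive for $\mu$-almost every $x$; and the integral against the probability measure $\mu$ of a $\mu$-a.e.\ positive operator-valued function is positive because the positive cone of $M_n(\B(\H))$ is closed and convex. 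Therefore $(\E_{\nu}\otimes\mathrm{id}_n)(\Psi)\ge0$ for every $n$, so $\E_{\nu}$ is completely positive, and together with the previous paragraph it is a ucp map.

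The only point requiring any care is the tensor/matrix bookkeeping just invoked --- rewriting the entrywise sandwich $\bigl[g^{1/2}\psi_{ij}g^{1/2}\bigr]$ as the single conjugation $(1_n\otimes g^{1/2})\,\Psi\,(1_n\otimes g^{1/2})$ --- after which positivity is forced by the pointwise inequality $\Psi(x)\ge0$; the reduction to a concrete integral, the entrywise description of $\E_{\nu}\otimes\mathrm{id}_n$, and the preservation of the positive cone under integration are all routine. A more structural route, which I would mention as a remark, is that $\E_{\nu}$ factors as the completely positive conjugation $\psi\mapsto g^{1/2}\psi g^{1/2}$ on $L^\infty(X,\mu)\overline\otimes\B(\H)$ followed by the completely positive slice map $\psi\mapsto\int_X\psi\,\d\mu$ onto $\B(\H)$, so that complete positivity is automatic from the stability of complete positivity under composition, with unitality checked as above.
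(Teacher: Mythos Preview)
The paper does not give a proof of this theorem: it appears in Section~\ref{Introsect}, which is explicitly a summary of background results imported from~\cite{farenick--kozdron2012},~\cite{FKP}, and~\cite{farenick--plosker--smith2011}, and no argument is supplied. So there is nothing in the present paper to compare your proof against.

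That said, your argument is correct and self-contained. The identification $\QE{\nu}{\psi}=\int_X g^{1/2}\psi g^{1/2}\,\d\mu$ via Definition~\ref{nuintdefn} is exactly right, and your verification that $\tr g=d$ $\mu$-a.e.\ (hence $\|g^{1/2}\|\le\sqrt d$) is the clean way to see the integrand is bounded. Unitality via $\int_X g\,\d\mu=\nu(X)=1$ is immediate, and the complete positivity step --- rewriting $[g^{1/2}\psi_{ij}g^{1/2}]$ as $(1_n\otimes g^{1/2})\Psi(1_n\otimes g^{1/2})$ and noting that the $\mu$-integral of a pointwise-positive function into a finite-dimensional positive cone stays positive --- is both correct and the standard route. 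Your closing remark, factoring $\E_\nu$ as conjugation by $g^{1/2}$ followed by the slice map $\psi\mapsto\int_X\psi\,\d\mu$, is in fact the most economical proof and is essentially how the original references organise the argument.
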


The following example carefully explains how  one can view $\QE{\nu}{\psi}$ as a quantum averaging of $\psi$. 
A version of this first appeared in Ref.~\cite{farenick--plosker--smith2011};  see also Theorem~2.3(4) of Ref.~\cite{farenick--kozdron2012}.

\begin{example}\label{quantumaverageexample} Let $X=\{x_1, \dots, x_n\}$ and let $\F(X)$ be the power set of $X$. 
If  $h_1, \dots, h_n\in \B(\H)_+$ are such that $h_1+\cdots + h_n=1\in \B(\H)$, and $\nu$ satisfies
$\nu(\{x_j\})=h_j$ for $j=1, \dots, n$, then for every $\psi:X\rightarrow \B(\H)$ we have
\[
\QE{\nu}{\psi}=\int_X\psi \d\nu=\sum_{j=1}^nh_j^{1/2}\psi(x_j)h_j^{1/2}.
\]
\end{example}


\section{Continuity of quantum expectation}\label{QEsect}

In this section we establish a natural quantum analogue of the classical Lebesgue dominated convergence theorem, namely Theorem~\ref{DCT}, continuity of quantum expectation, along with some related results.

\begin{defn}
Let $\psi:X \to \BH$ and suppose that  $\{\psi_n\}_{n=1}^\infty$ is a sequence of quantum random variables. We say
$\psi_n$ converges ultraweakly $\mu$-almost surely to $\psi$ if $\tr(\rho\psi_n(x))\to\tr(\rho\psi(x))$ for all $\rho\in\SH$ and $\mu$-almost all $x\in X$.  
\end{defn}

It is an easy fact that the ultraweak $\mu$-almost sure limit $\psi$ of the previous definition is itself a quantum random variable.

\begin{lemma}\label{QConv}
Let $\psi:X \to \BH$ and suppose that  $\{\psi_n\}_{n=1}^\infty$ is a sequence of quantum random variables. If $\psi_n$ converges ultraweakly $\mu$-almost surely to $\psi$, then $\psi$ is a quantum random variable.  
\end{lemma}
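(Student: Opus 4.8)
The plan is to reduce the statement to the classical fact that a pointwise (almost sure) limit of measurable scalar-valued functions is measurable. Recall that $\psi:X\to\BH$ is a quantum random variable precisely when $x\mapsto\langle\psi(x)\xi,\eta\rangle$ is $\F(X)$-measurable for every pair $\xi,\eta\in\H$. So fix $\xi,\eta\in\H$; we must show $x\mapsto\langle\psi(x)\xi,\eta\rangle$ is measurable, knowing that each $\psi_n$ is a quantum random variable and that $\tr(\rho\psi_n(x))\to\tr(\rho\psi(x))$ for every $\rho\in\SH$ and $\mu$-almost every $x$.

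First I would translate the hypothesis into a statement about the matrix entries of $\psi_n(x)$. If $\{e_1,\dots,e_d\}$ is an orthonormal basis of $\H$, then the rank-one operators of the form $|e_j\rangle\langle e_k|$ (suitably normalized) lie in the span of density operators, and $\tr\bigl(|e_j\rangle\langle e_k|\,\psi_n(x)\bigr)=\langle\psi_n(x)e_k,e_j\rangle$. Taking $\rho$ to range over density operators of the form $\tfrac12(|e_j\rangle\langle e_j|+|e_k\rangle\langle e_k|\pm\ldots)$ — more simply, using real and imaginary combinations $\rho=\tfrac{1}{2}(I+\ldots)$ restricted to the two-dimensional span of $e_j,e_k$ — one extracts that each matrix entry $x\mapsto\langle\psi_n(x)e_k,e_j\rangle$ converges to $\langle\psi(x)e_k,e_j\rangle$ for $\mu$-almost every $x$. (The only mild care needed is that finitely many null sets, one for each of the $d^2$ entries, union to a null set.) Since each $\psi_n$ is a quantum random variable, each entry $x\mapsto\langle\psi_n(x)e_k,e_j\rangle$ is measurable; hence its almost sure limit $x\mapsto\langle\psi(x)e_k,e_j\rangle$ is measurable, being the pointwise limit of measurable functions off a $\mu$-null set (and equality $\mu$-a.e.\ with a measurable function suffices, since we are working with the measure $\mu$ throughout).

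Finally I would assemble the general matrix coefficient from the entries: for arbitrary $\xi=\sum_k a_k e_k$ and $\eta=\sum_j b_j e_j$ we have $\langle\psi(x)\xi,\eta\rangle=\sum_{j,k} a_k\overline{b_j}\,\langle\psi(x)e_k,e_j\rangle$, a finite linear combination of measurable functions, hence measurable. By the characterization recalled above, $\psi$ is a quantum random variable.

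The main obstacle — really the only nonroutine point — is the first reduction step: verifying that the ultraweak convergence hypothesis, which is phrased in terms of \emph{all} density operators $\rho$, actually forces convergence of each individual matrix entry. This is where finite-dimensionality is used: the density operators span all of $\T(\H)=\BH$, so testing against every $\rho\in\SH$ is equivalent to testing against every $\rho\in\BH$, and in particular against the matrix units $|e_j\rangle\langle e_k|$. I would make this explicit by noting that every operator in $\BH$ is a linear combination of at most four density operators (decompose into self-adjoint and anti-self-adjoint parts, then each self-adjoint part into positive and negative parts, then normalize traces), so ultraweak almost sure convergence against all states is equivalent to entrywise almost sure convergence; everything after that is classical measure theory.
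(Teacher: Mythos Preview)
Your proposal is correct and follows essentially the same route as the paper's proof: both reduce to the classical fact that an almost-sure pointwise limit of measurable scalar functions is measurable. Your write-up is in fact more careful than the paper's, which simply asserts that measurability of $x\mapsto\tr(\rho\psi(x))$ for every state $\rho$ makes $\psi$ a quantum random variable; you spell out the finite-dimensional bridge (states span $\BH$, hence entrywise convergence, hence measurability of every matrix coefficient $\langle\psi(\cdot)\xi,\eta\rangle$) that the paper leaves implicit.
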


\begin{proof}
Since $\psi_n$ converges ultraweakly $\mu$-almost surely to $\psi$, it follows that $\tr(\rho\psi_n(x))\to\tr(\rho\psi(x))$ for all $\rho\in S(\H)$ and $\mu$-almost all $x\in X$.  But since each $\tr(\rho\psi_n(x))$ is a complex valued random variable, the limit of the sequence $\{\tr(\rho\psi_n(x))\}_{n=1}^\infty$ converges to a complex valued random variable, namely $\tr(\rho\psi(x))$ for each $x\in X$, and therefore $\psi$ is a quantum random variable.  
\end{proof}
		
\begin{lemma}\label{QConvCor}
Let $\{\psi_n\}_{n=1}^\infty$ be a sequence of quantum random variables. If $\psi_n$ converges ultraweakly $\mu$-almost surely to $\psi$, then $\psi_n\boxtimes\dfrac{\dd\nu}{\dd\mu}$ converges ultraweakly $\mu$-almost surely to $\psi\boxtimes\dfrac{\dd\nu}{\dd\mu}$.  
\end{lemma}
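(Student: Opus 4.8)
The plan is to reduce the statement to the definition of ultraweak $\mu$-almost sure convergence by unwinding the formula~\eqref{boxtimesdefn} for $\psi\boxtimes\frac{\dd\nu}{\dd\mu}$. Recall that in this case the $\boxtimes$ operation simplifies to conjugation by a fixed operator valued function, namely
\[
\psi_n\boxtimes\frac{\dd\nu}{\dd\mu}(x) = \left(\frac{\dd\nu}{\dd\mu}(x)\right)^{1/2}\psi_n(x)\left(\frac{\dd\nu}{\dd\mu}(x)\right)^{1/2},
\]
and likewise for $\psi$. So what must be shown is that for every $\rho\in\SH$ and $\mu$-almost every $x$,
\[
\tr\!\left(\rho\, a(x)^{1/2}\psi_n(x)\,a(x)^{1/2}\right)\longrightarrow \tr\!\left(\rho\, a(x)^{1/2}\psi(x)\,a(x)^{1/2}\right),
\]
where I write $a(x)=\frac{\dd\nu}{\dd\mu}(x)$ for brevity.

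First I would fix $\rho\in\SH$ and use the trace identity $\tr(\rho\, a^{1/2}\psi_n a^{1/2}) = \tr\!\big((a^{1/2}\rho\, a^{1/2})\,\psi_n\big)$, valid by cyclicity of the trace. Next I would observe that $a(x)^{1/2}\rho\, a(x)^{1/2}$ is a fixed positive trace-class operator for each $x$ (positive because $a(x)$ is a positive operator for $\mu$-almost all $x$ by the first theorem of Section~\ref{Introsect}). It need not be a density operator, but it is a nonnegative scalar multiple of one (or zero); since ultraweak convergence $\tr(\rho'\psi_n(x))\to\tr(\rho'\psi(x))$ for all density operators $\rho'$ extends by linearity to all trace-class $\rho'$, we get $\tr\!\big((a(x)^{1/2}\rho\, a(x)^{1/2})\psi_n(x)\big)\to\tr\!\big((a(x)^{1/2}\rho\, a(x)^{1/2})\psi(x)\big)$ for $\mu$-almost every $x$. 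Running the trace identity backwards then gives the claim for the fixed $\rho$.

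The only subtlety is the handling of the $\mu$-null sets: the hypothesis gives, for each $\rho$, a $\mu$-null exceptional set, and the conclusion of the first theorem gives another $\mu$-null set off of which $a(x)\ge 0$. Since $\BH$ is finite dimensional, it suffices to verify the convergence for $\rho$ ranging over a fixed finite spanning set of $\T(\H)$ (e.g. a basis of $d^2$ density-like operators), so only finitely many exceptional null sets arise; their union is still $\mu$-null, and off that union the convergence holds for every $\rho$ by linearity. I do not anticipate a genuine obstacle here — the argument is essentially bookkeeping around cyclicity of the trace and finite-dimensionality — so the ``hard part,'' such as it is, is merely making the null-set juggling precise and noting that ultraweak convergence against density operators upgrades to convergence against arbitrary trace-class operators in finite dimensions.
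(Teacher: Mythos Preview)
Your proposal is correct and follows essentially the same route as the paper: both arguments use cyclicity of the trace to rewrite $\tr\bigl(\rho\,a^{1/2}\psi_n a^{1/2}\bigr)$ as $\tr\bigl((a^{1/2}\rho\,a^{1/2})\psi_n\bigr)$ and then recognize $a(x)^{1/2}\rho\,a(x)^{1/2}$ as a nonnegative scalar multiple of a density operator, so the hypothesis applies. The paper carries out the normalization explicitly by setting $\tilde\rho_x=[\tr(\rho\,a(x))]^{-1}a(x)^{1/2}\rho\,a(x)^{1/2}$, whereas you invoke linearity to pass from density operators to arbitrary trace-class operators; your version is slightly cleaner in that it covers the degenerate case $\tr(\rho\,a(x))=0$ and is more explicit about the null-set bookkeeping, but the substance is the same.
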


\begin{proof}
For $\rho \in \SH$ and $x \in X$, let 
$$\tilde{\rho}_x=\left[\tr\left(\rho\dfrac{\dd\nu}{\dd\mu}(x)\right)\right]^{-1}\left(\left(\dfrac{\dd\nu}{\dd\mu}(x)\right)^{1/2}\rho\left(\dfrac{\dd\nu}{\dd\mu}(x)\right)^{1/2}\right),$$
and notice that $\tilde{\rho}_x\in\SH$.  
\color{black}
We know from~\eqref{boxtimesdefnparticular} that
\begin{align*}
\rho &\left(\psi_n\boxtimes\ds\frac{\dd\nu}{\dd\mu}\right)(x)\\
&=\rho\left(\frac{\dd\nu}{\dd\mu}(x)\right)^{1/2}\psi_n(x)\left(\frac{\dd\nu}{\dd\mu}(x)\right)^{1/2}\\
&=\tr\left(\rho\dfrac{\dd\nu}{\dd\mu}(x)\right)
\left[\tr\left(\rho\dfrac{\dd\nu}{\dd\mu}(x)\right)\right]^{-1}
\rho\left(\frac{\dd\nu}{\dd\mu}(x)\right)^{1/2}\psi_n(x)\left(\frac{\dd\nu}{\dd\mu}(x)\right)^{1/2}
\end{align*}
and so using properties of the trace functional, we therefore obtain
\begin{align*}
\tr&\left(\rho \left(\psi_n\boxtimes\ds\frac{\dd\nu}{\dd\mu}\right)(x) \right)\\
&= \tr\left(\rho\dfrac{\dd\nu}{\dd\mu}(x)\right)
\left[\tr\left(\rho\dfrac{\dd\nu}{\dd\mu}(x)\right)\right]^{-1}
\tr\left(\rho\left(\frac{\dd\nu}{\dd\mu}(x)\right)^{1/2}\psi_n(x)\left(\frac{\dd\nu}{\dd\mu}(x)\right)^{1/2}\right)\\
&= \tr\left(\rho\dfrac{\dd\nu}{\dd\mu}(x)\right)
\left[\tr\left(\rho\dfrac{\dd\nu}{\dd\mu}(x)\right)\right]^{-1}
\tr\left(\left(\frac{\dd\nu}{\dd\mu}(x)\right)^{1/2}\rho\left(\frac{\dd\nu}{\dd\mu}(x)\right)^{1/2}\psi_n(x)\right)\\
&= \tr\left(\rho\dfrac{\dd\nu}{\dd\mu}(x)\right)
\tr\left(\left[\tr\left(\rho\dfrac{\dd\nu}{\dd\mu}(x)\right)\right]^{-1}\left(\frac{\dd\nu}{\dd\mu}(x)\right)^{1/2}\rho\left(\frac{\dd\nu}{\dd\mu}(x)\right)^{1/2}\psi_n(x)\right)\\
&= \tr\left(\rho\dfrac{\dd\nu}{\dd\mu}(x)\right)
\tr\left(\tilde\rho_x\psi_n(x)\right).
\end{align*}
Hence, continuity of the trace functional, along with the assumption that $\psi_n$  converges ultraweakly $\mu$-almost surely to $\psi$, yields
\color{black}
\begin{align*}
\lim_{n\to\infty}\tr\left(\rho\left(\psi_n\boxtimes\dfrac{\dd\nu}{\dd\mu}\right)(x)\right)
&=\lim_{n\to\infty}\tr\left(\rho\dfrac{\dd\nu}{\dd\mu}(x)\right)\tr(\tilde{\rho}_x\psi_n(x))\\
&=\tr\left(\rho\dfrac{\dd\nu}{\dd\mu}(x)\right)\tr\left(\tilde{\rho}_x\lim_{n\to\infty}\psi_n(x)\right)\\
&=\tr\left(\rho\dfrac{\dd\nu}{\dd\mu}(x)\right)\tr\left(\tilde{\rho}_x\psi(x)\right)\\
&=\tr\left(\rho\left(\psi\boxtimes\dfrac{\dd\nu}{\dd\mu}\right)(x)\right)
\end{align*}
as required.
\end{proof}

We now prove the main result of this section, namely continuity of quantum expectation, which is a natural quantum analogue of the classical Lebesgue dominated convergence theorem.  
	
\begin{theorem}[Continuity of Quantum Expectation]\label{DCT}
Let $\psi:X \to \BH$. If  $\{\psi_n\}_{n=1}^\infty$ is a sequence of $\nu$-integrable quantum random variables that converges ultraweakly $\mu$-almost surely to $\psi$, and if there exists a $\mu$-integrable random variable $Z:X\to\C$ such that 
$$\ds \left|\tr\left(\rho\left(\psi_n\boxtimes\dfrac{\dd\nu}{\dd\mu}\right)\right)
\right|\leq Z$$
almost surely for all $\rho\in\SH$, then $\psi$ is $\nu$-integrable and $\QE{\nu}{\psi_n}\to\QE{\nu}{\psi}$ ultraweakly.
\end{theorem}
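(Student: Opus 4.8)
The plan is to reduce the whole statement to the classical dominated convergence theorem by pairing everything against a fixed density operator. Fix $\rho\in\SH$ and, for each $n$, introduce the scalar-valued functions
$\psi_{n,\rho}(x)=\tr\bigl(\rho\,(\psi_n\boxtimes\tfrac{\dd\nu}{\dd\mu})(x)\bigr)$ and $\psi_\rho(x)=\tr\bigl(\rho\,(\psi\boxtimes\tfrac{\dd\nu}{\dd\mu})(x)\bigr)$. By the identity $\psi\boxtimes\tfrac{\dd\nu}{\dd\mu}=(\tfrac{\dd\nu}{\dd\mu})^{1/2}\psi(\tfrac{\dd\nu}{\dd\mu})^{1/2}$ recorded just after~\eqref{boxtimesdefn}, these are precisely the functions $\psi_\rho$ appearing in Definition~\ref{nuintdefn}; since each $\psi_n$ is $\nu$-integrable, each $\psi_{n,\rho}$ lies in $L^1(X,\mu)$.

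First I would apply Lemma~\ref{QConvCor}: ultraweak $\mu$-almost sure convergence $\psi_n\to\psi$ forces $\psi_n\boxtimes\tfrac{\dd\nu}{\dd\mu}$ to converge ultraweakly $\mu$-almost surely to $\psi\boxtimes\tfrac{\dd\nu}{\dd\mu}$, hence $\psi_{n,\rho}(x)\to\psi_\rho(x)$ for $\mu$-almost every $x$. Lemma~\ref{QConv} guarantees that $\psi$ is a quantum random variable, so $\psi_\rho$ is $\F(X)$-measurable. The hypothesis supplies a single $\mu$-integrable $Z$ with $|\psi_{n,\rho}|\le Z$ $\mu$-almost surely for every $n$ (for fixed $\rho$ the countably many exceptional sets have null union). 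Classical dominated convergence then yields simultaneously that $\psi_\rho\in L^1(X,\mu)$ and that $\int_X\psi_{n,\rho}\d\mu\to\int_X\psi_\rho\d\mu$.

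Since $\rho\in\SH$ was arbitrary, $\psi_\rho\in L^1(X,\mu)$ for every density operator $\rho$, so by Definition~\ref{nuintdefn} the random variable $\psi$ is $\nu$-integrable and $\QE{\nu}{\psi}=\int_X\psi\d\nu$ is the unique operator with $\tr(\rho\QE{\nu}{\psi})=\int_X\psi_\rho\d\mu$ for all $\rho$. Using the corresponding identity for each $\psi_n$ we then get $\tr(\rho\QE{\nu}{\psi_n})=\int_X\psi_{n,\rho}\d\mu\to\int_X\psi_\rho\d\mu=\tr(\rho\QE{\nu}{\psi})$ for every $\rho\in\SH$. Because $\H$ is finite dimensional, every trace-class operator is a complex linear combination of density operators, so this says exactly that $\QE{\nu}{\psi_n}\to\QE{\nu}{\psi}$ ultraweakly.

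Once Lemmas~\ref{QConv} and~\ref{QConvCor} are in hand the argument is essentially bookkeeping, and I do not expect a genuine obstacle; the one point that deserves care is that the domination must be uniform in $\rho$, that is, a single $Z$ must dominate $|\tr(\rho(\psi_n\boxtimes\tfrac{\dd\nu}{\dd\mu}))|$ for every density operator at once. This uniformity is precisely what upgrades the conclusion from mere convergence of the paired expectations to $\nu$-integrability of the limit $\psi$ itself, i.e.\ to simultaneous $\mu$-integrability of $\psi_\rho$ over all $\rho$; with domination for only one $\rho$ the same computation would still give convergence of $\tr(\rho\QE{\nu}{\psi_n})$ but not $\nu$-integrability of $\psi$.
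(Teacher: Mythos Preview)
Your proposal is correct and follows essentially the same route as the paper's own proof: both reduce to the classical dominated convergence theorem by pairing with a fixed density operator, invoke Lemma~\ref{QConvCor} to obtain pointwise convergence of the scalar functions $\psi_{n,\rho}$, and then use Definition~\ref{nuintdefn} to conclude $\nu$-integrability of $\psi$ and ultraweak convergence of the expectations. Your write-up is slightly more detailed (explicitly citing Lemma~\ref{QConv} and the finite-dimensionality remark for ultraweak convergence), but the underlying argument is identical.
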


\begin{proof}
Begin by defining the sequence of complex valued random variables $\{\psi_\rho^{(n)}\}_{n=1}^\infty$ by 
$$\psi_\rho^{(n)}=\tr\left(\rho\left(\psi_n\boxtimes\dfrac{\dd\nu}{\dd\mu}\right)\right).$$
Using 
\color{black}
continuity 
\color{black}
of the trace functional along with Lemma~\ref{QConvCor}, we obtain
\begin{align*}
\lim_{n\to\infty}\psi_\rho^{(n)}
=\lim_{n\to\infty}\tr\left(\rho\left(\psi_n\boxtimes\dfrac{\dd\nu}{\dd\mu}\right)\right)
&=\tr\left(\rho\lim_{n\to\infty}\left(\psi_n\boxtimes\dfrac{\dd\nu}{\dd\mu}\right)\right)\\
&=\tr\left(\rho\left(\psi\boxtimes\dfrac{\dd\nu}{\dd\mu}\right)\right).
\end{align*}
That is, $\{\psi_\rho^{(n)}\}_{n=1}^\infty$ converges pointwise $\mu$-almost everywhere to 
$$\ds \tr\left(\rho\left(\psi\boxtimes\dfrac{\dd\nu}{\dd\mu}\right)\right).$$ 
By assumption, the sequence$\{\psi_\rho^{(n)}\}_{n=1}^\infty$ is bounded 
\color{black}
almost surely
\color{black}
 by a $\mu$-integrable random variable $Z:X\to\C$ so by the classical Lebesgue  dominated convergence theorem,
$$\ds \tr\left(\rho\left(\psi\boxtimes\dfrac{\dd\nu}{\dd\mu}\right)\right)$$
is a $\mu$-integrable random variable, and for every $\rho\in\SH$, we have
$$\int_X\tr\left(\rho\left(\psi_n\boxtimes\dfrac{\dd\nu}{\dd\mu}\right)\right)\d\mu \to \int_X\tr\left(\rho\left(\psi\boxtimes\dfrac{\dd\nu}{\dd\mu}\right)\right)\d\mu.$$
Therefore $\psi$ is a $\nu$-integrable function and $\tr(\rho\QE{\nu}{\psi_n})\to\tr(\rho\QE{\nu}{\psi})$ which implies that $\QE{\nu}{\psi_n}\to\QE{\nu}{\psi}$ ultraweakly.  
\end{proof}
		
As a first application of the continuity of quantum expectation we prove that, under certain conditions, quantum expectation is linear over infinite sums.  In fact, this could even be considered as a special case of a quantum Fubini-type theorem.  

\begin{theorem}\label{DCTcor}
Suppose that $\{\psi_n\}_{n=1}^\infty$  is a sequence of $\nu$-integrable quantum random variables.  If
$$\ds \sum_{n=1}^\infty\psi_n=\lim_{N\to\infty}\sum_{n=1}^N\psi_n$$
exists where convergence is with respect to the ultraweak topology of $\BH$, then 
$$\ds \sum_{n=1}^\infty\psi_n$$
is a $\nu$-integrable quantum random variable with
$$\ds \QE{\nu}{\sum_{n=1}^\infty\psi_n}=\sum_{n=1}^\infty \QE{\nu}{\psi_n}.$$
\end{theorem}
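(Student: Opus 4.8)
The plan is to reduce the statement to the continuity of quantum expectation, Theorem~\ref{DCT}, applied to the sequence of partial sums $S_N := \sum_{n=1}^N \psi_n$. First I would record the easy ingredients. Since the $\nu$-integrable quantum random variables form a vector space (the map $\psi\mapsto\psi_\rho$ of Definition~\ref{nuintdefn} is linear in $\psi$ and $\mu$-integrability is preserved under finite sums), each $S_N$ is a $\nu$-integrable quantum random variable, and by the linearity of quantum expectation, which is part of Theorem~\ref{varineq}, $\QE{\nu}{S_N}=\sum_{n=1}^N\QE{\nu}{\psi_n}$. Next, the hypothesis that $\sum_{n=1}^\infty\psi_n=\lim_N S_N$ exists ultraweakly says exactly that $S_N$ converges ultraweakly $\mu$-almost surely to $S:=\sum_{n=1}^\infty\psi_n$, so in particular $S$ is itself a quantum random variable by Lemma~\ref{QConv}. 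If Theorem~\ref{DCT} can be applied to $\{S_N\}$ and $S$, it yields at once that $S$ is $\nu$-integrable and that $\QE{\nu}{S_N}\to\QE{\nu}{S}$ ultraweakly; combining this with the previous identity gives $\sum_{n=1}^\infty\QE{\nu}{\psi_n}=\QE{\nu}{\sum_{n=1}^\infty\psi_n}$, which is the assertion.

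The substantive point is verifying the domination hypothesis of Theorem~\ref{DCT}: one must exhibit a $\mu$-integrable $Z:X\to\C$ with $\bigl|\tr\bigl(\rho\,(S_N\boxtimes\tfrac{\dd\nu}{\dd\mu})\bigr)\bigr|\le Z$ almost surely, for every $\rho\in\SH$ and every $N$. Here I would use three observations. Since $|\tr(\rho A)|\le\|A\|$ whenever $\rho$ is a state, and since $(S_N\boxtimes\tfrac{\dd\nu}{\dd\mu})(x)=(\tfrac{\dd\nu}{\dd\mu}(x))^{1/2}S_N(x)(\tfrac{\dd\nu}{\dd\mu}(x))^{1/2}$ by the formula following the definition of $\boxtimes$, submultiplicativity of the operator norm gives $\bigl|\tr\bigl(\rho\,(S_N\boxtimes\tfrac{\dd\nu}{\dd\mu})(x)\bigr)\bigr|\le\bigl\|\tfrac{\dd\nu}{\dd\mu}(x)\bigr\|\,\|S_N(x)\|$. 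Moreover $\tfrac{\dd\nu}{\dd\mu}$ is bounded: summing \eqref{rn defn} over an orthonormal basis shows $\tr\bigl(\tfrac{\dd\nu}{\dd\mu}(x)\bigr)=d$ for $\mu$-almost every $x$, and a positive operator of trace $d$ has norm at most $d$, so $\bigl\|\tfrac{\dd\nu}{\dd\mu}(x)\bigr\|\le d$. It then remains to bound $\sup_N\|S_N(x)\|$ by a $\mu$-integrable function, after which $Z$ can be taken to be a constant since $\mu$ is a probability measure.

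The main obstacle is precisely this last bound: extracting a genuine, $\mu$-integrable majorant of the partial sums from the mere existence of their ultraweak limit. The clean way to get it is to read the hypothesis at the level of the von Neumann algebra $L^\infty(X,\mu)\overline\otimes\BH$, where $\sum\psi_n$ was identified as an element: ultraweak (weak-$*$) convergence of the sequence $\{S_N\}$ in a dual Banach space is automatically norm-bounded, so $\sup_N\|S_N\|_{L^\infty\overline\otimes\BH}=:C<\infty$, i.e.\ $\|S_N(x)\|\le C$ for $\mu$-almost every $x$ and all $N$, and $Z:=dC$ works. If instead one wishes to argue with only pointwise ($\mu$-almost sure) ultraweak convergence in $\BH$, the same conclusion is available under any mild uniform control on the $\psi_n$ --- for instance $\sum_n\|\psi_n\|_\infty<\infty$, which holds in the intended applications. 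Granted the uniform bound on $\{S_N\}$, the rest is the routine trace manipulation already carried out in Lemma~\ref{QConvCor} and Theorem~\ref{DCT}.
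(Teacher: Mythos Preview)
Your outline is precisely the paper's argument: set $\varphi_N=\sum_{n=1}^N\psi_n$, observe that $\varphi_N$ converges ultraweakly $\mu$-almost surely to $\varphi=\sum_n\psi_n$, invoke Lemma~\ref{QConv} and Theorem~\ref{DCT}, and then use finite additivity of $\E_\nu$ to identify $\lim_N\QE{\nu}{\varphi_N}$ with $\sum_n\QE{\nu}{\psi_n}$.

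The difference is that the paper's proof applies Theorem~\ref{DCT} \emph{without ever verifying the domination hypothesis}, whereas you correctly isolate this as the only nontrivial point and supply two candidate justifications. Your von~Neumann algebra reading---if the hypothesis is taken to mean ultraweak convergence of $\{S_N\}$ in $L^\infty(X,\mu)\overline\otimes\BH$, then the uniform boundedness principle gives $\sup_N\|S_N\|_\infty<\infty$ and $Z\equiv d\cdot\sup_N\|S_N\|_\infty$ works---is sound, and is the interpretation under which the theorem is actually true as stated. Under the literal pointwise reading (``ultraweak topology of $\BH$'' at $\mu$-almost every $x$), you are right that a further assumption such as $\sum_n\|\psi_n\|_\infty<\infty$ is needed, since $x\mapsto\sup_N\|S_N(x)\|$ is finite a.e.\ but need not be $\mu$-integrable. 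In this respect your argument is more complete than the paper's own proof.
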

	
\begin{proof}
Let $\varphi_N=\ds\sum_{n=1}^N\psi_n$ so that $\varphi_N$ converge ultraweakly $\mu$-almost surely to $\varphi$ where $\varphi=\ds\sum_{n=1}^\infty\psi_n$.  
By Lemma~\ref{QConv}, $\varphi$ is a quantum random variable, and by Theorem~\ref{DCT}, $\varphi$ is $\nu$-integrable and
\begin{equation}\label{eqn1}
\lim_{N\to\infty}\QE{\nu}{\varphi_N}=\QE{\nu}{\varphi}.
 \end{equation}
However,  finite additivity of quantum expectation gives
$$\ds \QE{\nu}{\varphi_N}=\QE{\nu}{\ds\sum_{n=1}^N\psi_n}=\sum_{n=1}^N\QE{\nu}{\psi_n}$$
so from~\eqref{eqn1} we obtain
\[
 \sum_{n=1}^\infty\QE{\nu}{\psi_n} =\lim_{N\to\infty}\sum_{n=1}^N\QE{\nu}{\psi_n} =\lim_{N\to\infty}\QE{\nu}{\varphi_N} =\QE{\nu}{\varphi}
=\QE{\nu}{\sum_{n=1}^\infty\psi_n}
 \]
 as required. 
 \end{proof}

As an example of the type of calculations possible using the previous result, consider the following.
	
\begin{corollary}
If $\psi$ is an effect valued quantum random variable such that $\psi(x)\neq0$ and $\psi(x)\neq1$ for all $x\in X$, then
$\ds  \sum_{n=1}^\infty\QE{\nu}{\psi[1-(1+\psi^{-2})^{-1}]^n\psi}=1$.
\end{corollary}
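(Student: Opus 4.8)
The plan is to reduce the claim to a pointwise identity in $\BH$ via the continuous functional calculus and then invoke Theorem~\ref{DCTcor}. Fix $x\in X$ and write $a=\psi(x)$. Since the expression $\psi^{-2}$ requires $a$ to be invertible, I read the hypothesis as saying that $a\in\eff{\H}$ has no zero eigenvalue (the condition $\psi(x)\ne 1$ plays no role below). A direct computation gives $1-(1+a^{-2})^{-1}=(1+a^2)^{-1}$; call this operator $r$. Then $r=g(a)$ for the continuous function $g(t)=(1+t^2)^{-1}$, so $r$ is a positive operator commuting with $a$, and since $\sigma(a)\subseteq(0,1]$ and $\H$ is finite dimensional, $\|r\|=g(\min\sigma(a))<1$.

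First I would record the geometric-series identity. Because $\|r\|<1$, the series $\sum_{n=1}^\infty r^n$ converges in operator norm to $(1-r)^{-1}-1$; and from $1-r=(1+a^2)^{-1}a^2$ we get $(1-r)^{-1}=a^{-2}(1+a^2)=1+a^{-2}$, hence $\sum_{n=1}^\infty r^n=a^{-2}$. Since multiplication by the fixed operator $a$ is norm-continuous, this yields, pointwise in $x$,
$$\sum_{n=1}^\infty\psi(x)\,[1-(1+\psi(x)^{-2})^{-1}]^n\,\psi(x)=a\Big(\sum_{n=1}^\infty r^n\Big)a=a\,a^{-2}\,a=1 .$$
Moreover the $N$-th partial sum equals $a\,\big(a^{-2}(1-r^N)\big)\,a=1-r^N$, which is bounded in norm by $2$ and converges to $1$ in norm, hence ultraweakly, at every $x$.

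Next I would check the hypotheses of Theorem~\ref{DCTcor} for $\psi_n:=\psi\,[1-(1+\psi^{-2})^{-1}]^n\,\psi$. Each $\psi_n$ is measurable, being the composition of the measurable map $\psi$ with the operator-norm-continuous map $a\mapsto a[1-(1+a^{-2})^{-1}]^na$ on the invertible effects, and it is bounded since $\|\psi_n(x)\|=\|a^2r^n\|\le\|r\|^n\le 1$; a bounded quantum random variable is $\nu$-integrable because $\mu$ is a probability measure. By the previous paragraph the ultraweak sum $\sum_{n=1}^\infty\psi_n$ exists and equals the constant quantum random variable $\mathbf 1$, so Theorem~\ref{DCTcor} applies and gives
$$\sum_{n=1}^\infty\QE{\nu}{\psi_n}=\QE{\nu}{\sum_{n=1}^\infty\psi_n}=\QE{\nu}{\mathbf 1}=\int_X\mathbf 1\,\dd\nu=\nu(X)=1 .$$

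The only point requiring real care is the spectral bookkeeping of the first two paragraphs: confirming that $r$ is a genuine positive operator with $\|r\|<1$ and that the manipulations of the commuting, norm-convergent series $\sum r^n$ are legitimate. Once this is in place the corollary is immediate from Theorem~\ref{DCTcor}, and no domination estimate beyond the harmless uniform bound $\|1-r^N\|\le 2$ is needed (indeed that bound is already subsumed by the hypotheses of Theorem~\ref{DCTcor}).
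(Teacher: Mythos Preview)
Your argument is correct and is precisely the computation the paper intends: the corollary is stated there without proof, immediately after Theorem~\ref{DCTcor}, as ``an example of the type of calculations possible using the previous result,'' and your reduction via the functional calculus to the pointwise identity $a\bigl(\sum_{n\ge 1}(1+a^2)^{-n}\bigr)a=1$ followed by an appeal to Theorem~\ref{DCTcor} is exactly that calculation. Your observations that the hypothesis must be read as ``$\psi(x)$ has no zero eigenvalue'' (so that $\psi^{-2}$ is defined) and that the condition $\psi(x)\neq 1$ is not actually used are both accurate.
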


\color{black}
\begin{proof}
Observe that
\begin{align*}
\sum_{n=1}^\infty [1-(1+\psi^{-2})^{-1}]^n
&=-1+\sum_{n=0}^\infty [1-(1+\psi^{-2})^{-1}]^n\\
&=-1 + ( 1-  [1-(1+\psi^{-2})^{-1}]) ^{-1} \\
&=-1 +  (1+ \psi^{-2}) \\
&= \psi^{-2}
\end{align*}
and so the result now follows from Theorem~\ref{DCTcor}.
\end{proof}
\color{black}


\section{Quantum random variables with quantum expectation zero}\label{MeanZerosect}

We will shortly prove a characterization theorem for quantum random variables with quantum expectation zero. As a preliminary tool, we need the following straightforward lemma.

\begin{lemma}\label{lem1}
If $z\in\BH_+$, then $\ker(z)=\ker(z^{1/2})$ and $\ran(z)=\ran(z^{1/2})$.
\end{lemma}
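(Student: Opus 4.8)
The plan is to exploit the factorization $z = (z^{1/2})^2$ together with the self-adjointness of the positive square root, and then to pass from kernels to ranges using that $\H$ is finite dimensional.

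First I would establish the kernel identity $\ker(z) = \ker(z^{1/2})$ by proving two inclusions. The inclusion $\ker(z^{1/2}) \subseteq \ker(z)$ is immediate: if $z^{1/2}\xi = 0$ then $z\xi = z^{1/2}(z^{1/2}\xi) = 0$. For the reverse inclusion, suppose $z\xi = 0$; since $z^{1/2}$ is self-adjoint and $z = z^{1/2} z^{1/2}$, we have
\[
\|z^{1/2}\xi\|^2 = \langle z^{1/2}\xi, z^{1/2}\xi\rangle = \langle z^{1/2}z^{1/2}\xi,\xi\rangle = \langle z\xi,\xi\rangle = 0,
\]
so $z^{1/2}\xi = 0$. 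Hence the two kernels coincide.

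For the range identity I would invoke the fact that, because $\H$ is finite dimensional, every operator $a \in \BH$ satisfies $\ran(a) = \ker(a^*)^\perp$ (the range is automatically closed). Applying this to the self-adjoint operators $z$ and $z^{1/2}$ gives $\ran(z) = \ker(z)^\perp$ and $\ran(z^{1/2}) = \ker(z^{1/2})^\perp$. Since we have just shown $\ker(z) = \ker(z^{1/2})$, taking orthogonal complements yields $\ran(z) = \ran(z^{1/2})$, completing the proof.

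I do not anticipate any real obstacle here: the argument is entirely elementary, and the only point worth flagging is that the range identity relies on closedness of ranges, which is free in the finite-dimensional setting $\dim \H = d < \infty$ assumed throughout; an alternative, equally short route would be to diagonalize $z = \sum_i \lambda_i P_i$ with $\lambda_i \ge 0$, observe $z^{1/2} = \sum_i \lambda_i^{1/2} P_i$, and read off that both $\ker$ and $\ran$ are determined by the set of indices with $\lambda_i = 0$ (respectively $\lambda_i \neq 0$).
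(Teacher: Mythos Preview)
Your proof is correct and follows essentially the same approach as the paper: the kernel equality is proved via the two inclusions using $z=(z^{1/2})^2$ and the inner-product identity $\langle z\xi,\xi\rangle=\|z^{1/2}\xi\|^2$, and the range equality is deduced from the orthogonal decomposition $\H=\ker(a^*)\oplus\ran(a)$ applied to the self-adjoint operators $z$ and $z^{1/2}$. The only cosmetic difference is that you make the finite-dimensionality (closedness of ranges) explicit, which the paper leaves implicit.
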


\begin{proof}
If $\eta\in\ker(z^{1/2})$, then $z^{1/2}\eta=0$ implying that $z\eta=z^{1/2}z^{1/2}\eta=0$ so $\eta\in\ker(z)$.  
Conversely, if $\eta\in\ker(z)$, then $z\eta=0$ so that $0=\langle z\eta,\eta\rangle=\langle z^{1/2}\eta,z^{1/2}\eta\rangle$
implying $z^{1/2}\eta=0$ so $\eta\in\ker(z^{1/2})$. Since $z\in\BH_+$ and $z=z^*$, it  follows from the orthogonal decomposition $\H=\ker(z^*)\oplus\ran(z)$ that $\ran(z)=\ran(z^{1/2})$.  
\end{proof}

\begin{theorem}\label{meanzerothm}
If $\psi:X\to\BH_+$ is a positive $\nu$-integrable quantum random variable, then the following statements are equivalent.
{\em 
\begin{itemize}
\item[(A)]  $\QE{\nu}{\psi}=0$.
\item[(B)]  {\em $\ran(\psi(x))\perp\ran\left(\ds\frac{\dd\nu}{\dd\mu}(x)\right)$ for $\mu$-almost all $x\in X$.}
\item[(C)] {\em $\psi(x)^*\ds\frac{\dd\nu}{\dd\mu}(x)=0$ for $\mu$-almost all $x\in X$.}
\item[(D)] {\em $\left(\psi\boxtimes\ds\frac{\dd\nu}{\dd\mu}\right)(x)=0$ for  $\mu$-almost all $x\in X$.}
\item[(E)]  {\em $\psi(x)^{1/2}\left(\ds\frac{\dd\nu}{\dd\mu}(x)\right)^{1/2}=0$ for $\mu$-almost all $x\in X$.}
\end{itemize}
}
\end{theorem}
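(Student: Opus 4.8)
The plan is to split the proof into a pointwise linear-algebra core, establishing that conditions (B), (C), (D), and (E) hold at a given point $x$ under exactly the same circumstances, and one analytic step linking (A) to (D). Throughout, I would abbreviate $r(x):=\frac{\dd\nu}{\dd\mu}(x)$, which is a positive operator for $\mu$-almost every $x$ by the theorem on the principal Radon-Nikod\'ym derivative in Section~\ref{Introsect}, and recall that $\left(\psi\boxtimes\frac{\dd\nu}{\dd\mu}\right)(x)=r(x)^{1/2}\psi(x)r(x)^{1/2}$.

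For the core, I would fix $x$ with $r(x)\ge0$ and write $a=\psi(x)\ge0$, $b=r(x)\ge0$; the claim is that the four identities $b^{1/2}ab^{1/2}=0$, $a^{1/2}b^{1/2}=0$, $\ran(a)\perp\ran(b)$, and $a^*b=0$ (these being conditions (D), (E), (B), and (C) at $x$, respectively) are mutually equivalent. The equivalence of the first two is immediate from $b^{1/2}ab^{1/2}=\left(a^{1/2}b^{1/2}\right)^*\left(a^{1/2}b^{1/2}\right)$ together with the elementary fact that $c^*c=0$ forces $c=0$. For $a^{1/2}b^{1/2}=0\iff\ran(a)\perp\ran(b)$, note that the identity says $\ran\left(b^{1/2}\right)\subseteq\ker\left(a^{1/2}\right)$; Lemma~\ref{lem1} rewrites this as $\ran(b)\subseteq\ker(a)$, and since $a=a^*$ and $\H$ is finite dimensional, the orthogonal decomposition $\H=\ker(a)\oplus\ran(a)$ makes it equivalent to $\ran(a)\perp\ran(b)$. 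Finally, $a$ is positive, hence self-adjoint, so $a^*b=0$ is the same as $ab=0$, i.e.\ $\ran(b)\subseteq\ker(a)=\ran(a)^\perp$, which closes the loop. Since this holds for every $x$ in the conull set where $r(x)\ge0$, the $\mu$-almost-everywhere statements (B), (C), (D), (E) are equivalent.

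It remains to prove (A)$\iff$(D). For (D)$\Rightarrow$(A): if $\left(\psi\boxtimes\frac{\dd\nu}{\dd\mu}\right)$ vanishes $\mu$-a.e., then for every density operator $\rho$ the function $\psi_\rho$ of Definition~\ref{nuintdefn} is $\mu$-a.e.\ zero, so $\tr\left(\rho\QE{\nu}{\psi}\right)=\int_X\psi_\rho\d\mu=0$ for all $\rho$, which forces $\QE{\nu}{\psi}=0$. For (A)$\Rightarrow$(D): I would apply Definition~\ref{nuintdefn} with the maximally mixed state $\rho=\frac{1}{d}\,1$ to obtain $0=\frac{1}{d}\tr\left(\QE{\nu}{\psi}\right)=\int_X\frac{1}{d}\tr\left(\left(\psi\boxtimes\frac{\dd\nu}{\dd\mu}\right)(x)\right)\d\mu(x)$; since $\psi\ge0$ and $r\ge0$ give $r^{1/2}\psi r^{1/2}\ge0$, the integrand is nonnegative, hence zero $\mu$-a.e., and a positive operator with vanishing trace is $0$, which is precisely (D).

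The only genuinely delicate point is this last implication (A)$\Rightarrow$(D), where a pointwise operator identity must be recovered from the vanishing of a single scalar integral. Positivity of $\psi$ is what keeps it painless: testing against the one state $\frac{1}{d}\,1$ already yields a nonnegative integrand, so ordinary Lebesgue theory finishes the job. Without positivity one would instead have to test against a countable dense family of states and reconcile the resulting null sets, and even then would only control $r^{1/2}\psi r^{1/2}$ rather than $\psi$ itself. Everything else is bookkeeping with Lemma~\ref{lem1} and the two elementary facts ``$c^*c=0\iff c=0$'' and ``positive operator with zero trace $\iff$ zero operator''.
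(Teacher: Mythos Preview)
Your proof is correct and uses essentially the same ingredients as the paper's: the identity $b^{1/2}ab^{1/2}=(a^{1/2}b^{1/2})^*(a^{1/2}b^{1/2})$ together with ``$c^*c=0\iff c=0$'' for (D)$\iff$(E), the range/kernel arguments via Lemma~\ref{lem1} for (B) and (C), and the test state $\rho=\frac{1}{d}\,1$ with positivity of the integrand for (A)$\Rightarrow$(D). Your organisation---isolating the pointwise linear algebra for (B)$\iff$(C)$\iff$(D)$\iff$(E) and then linking (A) to (D) separately---is a bit cleaner than the paper's more interleaved chain (E)$\iff$(D), (A)$\Rightarrow$(E)$\Rightarrow$(C), (B)$\iff$(C)$\Rightarrow$(D), (D)$\Rightarrow$(A), but the substance is the same.
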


\begin{proof} Throughout the proof, let  $z=z(x)$ be given by 
$$\ds z(x)=\psi(x)^{1/2}\left(\frac{\dd\nu}{\dd\mu}(x)\right)^{1/2},$$
 and note that 
$\psi(x)=\psi(x)^*$ since $\psi(x) \in \BH_+$ for all $x\in X$.
To show (E)$\iff$(D), note that $z=0$
\color{black}
for $\mu$-almost all $x$
\color{black}
 if and only if $z^*z=0$ 
\color{black}
for $\mu$-almost all $x$
\color{black}
 and so
\begin{equation}\label{proofeqn1}
\left(\psi\boxtimes\ds\frac{\dd\nu}{\dd\mu}\right)(x)=\left(\frac{\dd\nu}{\dd\mu}(x)\right)^{1/2}\psi(x)\left(\frac{\dd\nu}{\dd\mu}(x)\right)^{1/2}=z^*z \ge 0
\end{equation}
\color{black}
for $\mu$-almost all $x$.
\color{black}
To show (A)$\implies$(E)$\implies$(C), suppose that $\QE{\nu}{\psi}=0$ which implies
\begin{align}\label{eqn2}
\int_X\tr\left(\rho z^*z\right)\d\mu=
\int_X\tr\left(\rho\left(\frac{\dd\nu}{\dd\mu}(x)\right)^{1/2}\psi(x)\left(\frac{\dd\nu}{\dd\mu}(x)\right)^{1/2}\right)\d\mu&=\tr(\rho\QE{\nu}{\psi}) \notag \\
&=0
\end{align}
for every $\rho\in\SH$ and
\color{black}
$\mu$-almost all $x$.
\color{black}
Since $z^*z\ge0$
\color{black}
for $\mu$-almost all $x$
\color{black}, we deduce from~\eqref{eqn2} that 
$\tr\left(\rho z^* z\right)=0$
for every  $\rho\in\SH$
\color{black}
and $\mu$-almost all $x$.
\color{black}
Choosing $\rho=1/d \in \SH$ implies that $\tr(z^*z)=0$
\color{black}
for $\mu$-almost all $x$,
\color{black}
 from which it follows that $z=0$
 \color{black}
for $\mu$-almost all $x$;
\color{black}
that is,  (E) holds. Multiplying (E) on the left by 
$\psi(x)^{1/2}$ and on the right by $\left(\ds\frac{\dd\nu}{\dd\mu}(x)\right)^{1/2}$ yields (C).
To show (B)$\iff$(C)$\implies(D)$, note that if $\psi$ is any $\BH$ valued  (and not just $\BH_+$ valued) quantum random variable, then
$\ds \psi(x)^*\frac{\dd\nu}{\dd\mu}(x)=0$
\color{black}
for $\mu$-almost all $x$
\color{black}
if and only if 
$\ds \left\langle\xi,\psi(x)^*\frac{\dd\nu}{\dd\mu}(x)\eta\right\rangle =0$ for all $\xi$, $\eta\in\H$
\color{black}
and $\mu$-almost all $x$
\color{black}
if and only if 
$ \ds \left\langle\psi(x)\xi,\frac{\dd\nu}{\dd\mu}(x)\eta\right\rangle =0$ for all $\xi$, $\eta\in\H$
\color{black}
and $\mu$-almost all $x$
\color{black}
 if and only if 
$\ds \ran(\psi(x))\perp\ran\left(\frac{\dd\nu}{\dd\mu}(x)\right)$
\color{black}
for $\mu$-almost all $x$.
\color{black}
That is, (B)$\iff$(C). Hence, if (C) holds, then Lemma~\ref{lem1} implies
$\ds \ran(\psi(x))\perp\ran\left(\left(\frac{\dd\nu}{\dd\mu}(x)\right)^{1/2}\right)$
\color{black}
for $\mu$-almost all $x$
\color{black}
and so from the already proved (B)$\iff$(C), we conclude
$\ds \psi(x)^*\left(\frac{\dd\nu}{\dd\mu}(x)\right)^{1/2}=0$
\color{black}
for $\mu$-almost all $x$.
\color{black}
Taking the adjoint of the previous equality and multiplying on the right by 
$\ds \left(\frac{\dd\nu}{\dd\mu}(x)\right)^{1/2}$
yields (D)
as desired.
To complete the proof, we will show (D)$\implies$(A). If $\psi$ is any $\BH$ valued  (and not just $\BH_+$ valued) quantum random variable for which (D) holds,
then since $\QE{\nu}{\psi}$ is the unique operator with
$$\tr(\rho\QE{\nu}{\psi})=\int_X \tr\left(\rho \left(\psi\boxtimes\frac{\dd\nu}{\dd\mu}\right)\right) \d\mu=0$$
for all $\rho\in\SH$, we conclude $\QE{\nu}{\psi}=0$ as required.  
\end{proof}

In the event that $\psi$ is a $\BH$ valued quantum random variable, as opposed to a $\BH_+$ valued one, the statements of the previous theorem are no longer all equivalent.

\begin{corollary}\label{meanzerocor}
Let $\psi:X\to\BH$ be a $\nu$-integrable quantum random variable and consider the following statements.
{\em 
\begin{itemize}
\item[(A)] $\QE{\nu}{\psi}=0$.
\item[(B)] {\em $\ran(\psi(x))\perp\ran\left(\ds\frac{\dd\nu}{\dd\mu}(x)\right)$ for $\mu$-almost all $x\in X$.}
\item[(C)] {\em $\psi(x)^*\ds\frac{\dd\nu}{\dd\mu}(x)=0$ for $\mu$-almost all $x\in X$.}
\item[(D)] {\em $\left(\psi\boxtimes\ds\frac{\dd\nu}{\dd\mu}\right)(x)=0$ for  $\mu$-almost all $x\in X$.}
\end{itemize}
}
The following diagram describes the relationships between these statements.
$$
\begin{array}{ccccccc}
\mathrm{(B)}&\iff&\mathrm{(C)}&\implies&\mathrm{(D)}&\implies&\mathrm{(A)} \\
\end{array}
$$
Moreover, no other implications hold in general.
\end{corollary}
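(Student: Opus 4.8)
The plan is to treat the two claims separately. For the displayed implications $(B)\iff(C)\implies(D)\implies(A)$, essentially nothing new is needed, since in the proof of Theorem~\ref{meanzerothm} these three steps were carried out \emph{without} using positivity of $\psi$. Indeed, $(B)\iff(C)$ is the chain $\psi(x)^*\frac{\dd\nu}{\dd\mu}(x)=0 \iff \l\psi(x)\xi,\frac{\dd\nu}{\dd\mu}(x)\eta\r=0$ for all $\xi,\eta\in\H \iff \ran(\psi(x))\perp\ran(\frac{\dd\nu}{\dd\mu}(x))$, which is valid for an arbitrary operator in the second slot; $(C)\implies(D)$ uses Lemma~\ref{lem1} to pass from $\frac{\dd\nu}{\dd\mu}(x)$ to $(\frac{\dd\nu}{\dd\mu}(x))^{1/2}$ and then sandwiches $\psi(x)$ between two copies of it; and $(D)\implies(A)$ follows at once from the characterization of $\QE{\nu}{\psi}$ as the unique operator $T$ with $\tr(\rho T)=\int_X\tr(\rho(\psi\boxtimes\frac{\dd\nu}{\dd\mu}))\d\mu$ for all $\rho\in\SH$. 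So I would dispatch this half in a sentence or two by pointing to those lines of the earlier proof.

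The substance is the ``no other implications'' claim. Since $(B)\iff(C)$, $(C)\implies(D)\implies(A)$, and both $(B)$ and $(C)$ already imply everything, the only candidate implications left are $(A)\implies(D)$ (which would also force $(A)\implies(C)$ and $(A)\implies(B)$) and $(D)\implies(C)$, equivalently $(D)\implies(B)$. I would refute each with one small explicit quantum probability space on $X=\{x_1,x_2\}$, with $\F(X)$ the power set and $\mu$ the uniform measure. For $(A)\not\Rightarrow(D)$, take $\H=\C^2$ and $\nu=\mu\cdot 1$, so that $\frac{\dd\nu}{\dd\mu}=1$ and $\psi\boxtimes\frac{\dd\nu}{\dd\mu}=\psi$; then $\psi(x_1)=1$ and $\psi(x_2)=-1$ (scalar multiples of the identity) give $\QE{\nu}{\psi}=\tfrac12\cdot 1-\tfrac12\cdot 1=0$ while $(\psi\boxtimes\frac{\dd\nu}{\dd\mu})(x_1)=1\neq 0$; this also records that the gap between $(A)$ and $(D)$ is already present classically.

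For $(D)\not\Rightarrow(C)$, I would keep $\H=\C^2$, $X=\{x_1,x_2\}$, $\mu$ uniform, and let $\nu(\{x_j\})=P_j$ be the two diagonal rank-one projections. Then $\nu$ is a quantum probability measure, its induced classical measure is exactly the uniform $\mu$, and $\frac{\dd\nu}{\dd\mu}(x_j)=2P_j$, so $(\frac{\dd\nu}{\dd\mu}(x_j))^{1/2}=\sqrt 2\,P_j$. Taking $\psi(x_1)=E_{12}$, the matrix unit with a $1$ in the $(1,2)$ entry and zeros elsewhere, and $\psi(x_2)=0$, one computes $(\psi\boxtimes\frac{\dd\nu}{\dd\mu})(x_1)=2P_1E_{12}P_1=0$ and $(\psi\boxtimes\frac{\dd\nu}{\dd\mu})(x_2)=0$, so $(D)$ holds, whereas $\psi(x_1)^*\frac{\dd\nu}{\dd\mu}(x_1)=2E_{21}P_1=2E_{21}\neq 0$, so $(C)$ fails on $\{x_1\}$, a set of positive $\mu$-measure, and therefore $(B)$ fails as well. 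The only real insight here — and the one place I expect a moment's thought — is recognizing that one needs a quantum probability measure whose Radon-Nikod\'ym derivative is rank-deficient on a positive-measure set, so that a nonzero $\psi$ can be hidden in the orthogonal complement after the $\boxtimes$-twist; once the space is chosen, everything reduces to routine $2\times 2$ matrix arithmetic, including the normalization check $\nu(\{x_1\})+\nu(\{x_2\})=1$ and the computation of $\frac{\dd\nu}{\dd\mu}$.
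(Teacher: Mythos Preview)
Your proposal is correct and follows essentially the same approach as the paper: the implications are dispatched by pointing back to the general-$\psi$ steps in the proof of Theorem~\ref{meanzerothm}, and the failures of $(A)\Rightarrow(D)$ and $(D)\Rightarrow(C)$ are exhibited by explicit examples on a two-point space with the very same quantum probability measures $\nu_1=\mu\cdot 1$ and $\nu_2(\{x_j\})=P_j$. The only difference is that for $(D)\not\Rightarrow(C)$ you choose the simpler $\psi(x_1)=E_{12}$, $\psi(x_2)=0$ in place of the paper's self-adjoint choice $\psi_2(x_1)=\begin{bmatrix}0&1\\1&1\end{bmatrix}$, $\psi_2(x_2)=\begin{bmatrix}1&1\\1&0\end{bmatrix}$, which is a cosmetic simplification rather than a new idea.
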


\begin{proof}
The fact that the implications
(B)$\iff$(C)$\implies$(D)
and (D)$\implies$(A)
hold for $\B(\H)$ valued quantum random variables was established in the proof of Theorem~\ref{meanzerothm}. 
To show that no other implications hold in general, we consider two examples. Let $X=\{x_1,x_2\}$, and consider the quantum probability measures $\nu_1$ and $\nu_2$ defined by
\[
\nu_1(\{x_1\})=
\nu_1(\{x_2\})=
\begin{bmatrix}
1/2&0\\
0&1/2\\
\end{bmatrix}
\quad\textrm{and}\quad
\nu_2(\{x_1\})=
\begin{bmatrix}
1&0\\
0&0\\
\end{bmatrix}, \;\;
\nu_2(\{x_2\})=
\begin{bmatrix}
0&0\\
0&1\\
\end{bmatrix}
\]
as well as the quantum random variables $\psi_1$ and $\psi_2$ defined by
\[
\psi_1(x_1)=
\begin{bmatrix}
1&0\\
0&1\\
\end{bmatrix}, \;\;
\psi_1(x_2)=
\begin{bmatrix}
-1&0\\
0&-1\\
\end{bmatrix}
\quad\textrm{and}\quad
\psi_2(x_1)=
\begin{bmatrix}
0&1\\
1&1\\
\end{bmatrix}, \;\;
\psi_2(x_2)=
\begin{bmatrix}
1&1\\
1&0\\
\end{bmatrix}.
\]
Since $X$ is finite, the principal Radon-Nikod\'ym derivative is easily computed, namely
\[
\frac{\dd\nu_i}{\dd\mu_i}(x_j)=2\frac{\nu(\{x_j\})}{\tr(\nu(\{x_j\}))}
\]
for $i,j\in\{1,2\}$.
It is now easy to check that $\QE{\nu_1}{\psi_1}=\begin{bmatrix}
0&0\\
0&0\\
\end{bmatrix}
$ although
$$
\psi_1(x_1)^* \frac{\dd\nu_1}{\dd\mu_1}(x_1)=
\begin{bmatrix}
1&0\\
0&1\\
\end{bmatrix}
\quad\textrm{and}\quad
\psi_1(x_2)^* \frac{\dd\nu_1}{\dd\mu_1}(x_2)=
\begin{bmatrix}
-1&0\\
0&-1\\
\end{bmatrix}
$$
and
$$
\left(\psi_1\boxtimes\frac{\dd\nu_1}{\dd\mu_1}\right)(x_1)=
\begin{bmatrix}
1&0\\
0&1\\
\end{bmatrix}
\quad\textrm{and}\quad\left(\psi_1\boxtimes\frac{\dd\nu_1}{\dd\mu_1}\right)(x_2)=
\begin{bmatrix}
-1&0\\
0&-1\\
\end{bmatrix}.
$$
Hence, in this example (A) holds, but neither (C)  nor (D) hold.
Moreover, one can check that
$$
\left(\psi_2\boxtimes\frac{\dd\nu_2}{\dd\mu_2}\right)(x_1)=
\left(\psi_2\boxtimes\frac{\dd\nu_2}{\dd\mu_2}\right)(x_2)=
\begin{bmatrix}
0&0\\
0&0\\
\end{bmatrix}
$$
whereas
$$
\psi_2(x_1)^* \frac{\dd\nu_2}{\dd\mu_2}(x_1)=
\begin{bmatrix}
0&0\\
2&0\\
\end{bmatrix}
\quad\textrm{and}\quad
\psi_2(x_2)^* \frac{\dd\nu_2}{\dd\mu_2}(x_2)=
\begin{bmatrix}
0&2\\
0&0\\
\end{bmatrix}
$$
providing an example for which (D) holds, but (C) does not hold.
\end{proof}

\begin{corollary}
If $\psi:X\to\BH$ is a $\nu$-integrable quantum random variable and
$\psi(x)\ds\frac{\dd\nu}{\dd\mu}(x)=0$ for $\mu$-almost all $x\in X$, then
$\QE{\nu}{\psi}=0$.
\end{corollary}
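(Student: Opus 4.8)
The plan is to reduce the claim to statement (D) of Corollary~\ref{meanzerocor}. That corollary records the implication (D)$\implies$(A), so once we show that $\left(\psi\boxtimes\frac{\dd\nu}{\dd\mu}\right)(x)=0$ for $\mu$-almost all $x\in X$, the conclusion $\QE{\nu}{\psi}=0$ follows at once. Everything therefore comes down to a pointwise operator-range computation valid for $\mu$-almost every $x$, and Lemma~\ref{lem1} is the only real ingredient.

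First I would fix an $x\in X$ at which $\frac{\dd\nu}{\dd\mu}(x)\in\BH_+$ (true for $\mu$-almost all $x$, since the principal Radon-Nikod\'ym derivative is positive $\mu$-almost everywhere) and at which $\psi(x)\frac{\dd\nu}{\dd\mu}(x)=0$. The latter identity says precisely that $\ran\left(\frac{\dd\nu}{\dd\mu}(x)\right)\subseteq\ker(\psi(x))$. Applying Lemma~\ref{lem1} to the positive operator $z=\frac{\dd\nu}{\dd\mu}(x)$ gives $\ran\left(\left(\frac{\dd\nu}{\dd\mu}(x)\right)^{1/2}\right)=\ran\left(\frac{\dd\nu}{\dd\mu}(x)\right)$, and hence $\ran\left(\left(\frac{\dd\nu}{\dd\mu}(x)\right)^{1/2}\right)\subseteq\ker(\psi(x))$, which is the same as $\psi(x)\left(\frac{\dd\nu}{\dd\mu}(x)\right)^{1/2}=0$. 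Multiplying this identity on the left by $\left(\frac{\dd\nu}{\dd\mu}(x)\right)^{1/2}$ and recalling that $\psi\boxtimes\frac{\dd\nu}{\dd\mu}=\left(\frac{\dd\nu}{\dd\mu}\right)^{1/2}\psi\left(\frac{\dd\nu}{\dd\mu}\right)^{1/2}$, we obtain
\[
\left(\psi\boxtimes\frac{\dd\nu}{\dd\mu}\right)(x)=\left(\frac{\dd\nu}{\dd\mu}(x)\right)^{1/2}\psi(x)\left(\frac{\dd\nu}{\dd\mu}(x)\right)^{1/2}=0.
\]
Since $x$ ranged over a set of full $\mu$-measure, statement (D) of Corollary~\ref{meanzerocor} holds, and (D)$\implies$(A) finishes the argument.

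The one point requiring care is that one cannot simply invoke statement (C) of Corollary~\ref{meanzerocor}: the hypothesis here is $\psi(x)\frac{\dd\nu}{\dd\mu}(x)=0$, not $\psi(x)^*\frac{\dd\nu}{\dd\mu}(x)=0$, and these are genuinely different conditions (the $\psi_2$ example in the proof of Corollary~\ref{meanzerocor} exhibits (D) holding while (C) fails). Routing through (D), as above, sidesteps the difficulty. An equivalent alternative would be to observe that $\psi(x)\frac{\dd\nu}{\dd\mu}(x)=0$ is exactly condition (C) for the quantum random variable $x\mapsto\psi(x)^*$, apply Corollary~\ref{meanzerocor} to that variable to get $\QE{\nu}{\psi^*}=0$, and then check from the defining property of quantum expectation that $\QE{\nu}{\psi^*}=\QE{\nu}{\psi}^*$; I expect the first route to be marginally cleaner since it avoids this last identity. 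In either case there is no substantive obstacle here — the content is entirely in Lemma~\ref{lem1} together with the already-established implication (D)$\implies$(A), and the only subtlety is the bookkeeping with adjoints.
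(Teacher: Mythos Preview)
Your proposal is correct. Your primary route---showing (D) directly for $\psi$ via Lemma~\ref{lem1} and the range/kernel inclusion $\ran\left(\frac{\dd\nu}{\dd\mu}(x)\right)\subseteq\ker(\psi(x))$---works cleanly. The paper, however, takes precisely the alternative you sketch in your final paragraph: it observes that the hypothesis is condition (C) for $\psi^*$, applies (C)$\implies$(A) of Corollary~\ref{meanzerocor} to obtain $\QE{\nu}{\psi^*}=0$, and then concludes via $\QE{\nu}{\psi}=\QE{\nu}{\psi^{**}}=\QE{\nu}{\psi^*}^*=0^*=0$. Your direct route avoids the identity $\QE{\nu}{\psi^*}=\QE{\nu}{\psi}^*$ at the cost of essentially re-running the (C)$\implies$(D) argument from the proof of Theorem~\ref{meanzerothm}; the paper's route is a one-liner once that adjoint identity is granted. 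Both are entirely valid, and you have correctly anticipated both paths.
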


\begin{proof}
It follows from the implication (C)$\implies$(A) of Corollary~\ref{meanzerocor} that $\QE{\nu}{\psi^*}=0$ and so $\QE{\nu}{\psi} =\QE{\nu}{\psi^{**}} =\QE{\nu}{\psi^*}^*=0^*=0$ as required.
\end{proof}


\section{A quantum martingale convergence theorem}\label{MCTsect}

In this section we establish a quantum martingale convergence theorem for  quantum martingales obtained by conditioning on a fixed quantum random variable. Recall that a stochastic process $\{M_j\}_{j=0}^\infty$ defined on a filtered probability space is a martingale with respect to the filtration $\{\F_j\}_{j=0}^\infty$ if (i) $M_j$ is $\F_j$-measurable, (ii) $\E[\,|M_j|\,]<\infty$, and (iii) $M_j=\E[M_{j+1}\,|\,\F_j]$  for all $j$. The following version of the martingale convergence theorem is suitable for our purposes;  see Theorem~3.7.3 of Ref.~\cite{bob} for a proof.

\begin{theorem}[Martingale Convergence Theorem]\label{classicMCT}
If $\{M_j\}_{j=0}^\infty$ is a martingale with respect to the filtration $\{\F_j\}_{j=0}^\infty$ and there exists $C>0$ such that $\E[\,|M_j|\,]<C$ for all $j$, then there exists a random variable $M_\infty$ such that $\E[\,|M_\infty|\,]<\infty$ and $M_j$ converges to $M_\infty$ almost surely.  
\end{theorem}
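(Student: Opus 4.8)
This is the classical martingale convergence theorem in its $L^1$-bounded formulation, and the plan is to prove it via Doob's upcrossing inequality. Fix rationals $a < b$ and, for each $N$, let $U_N[a,b]$ denote the number of upcrossings of the interval $[a,b]$ completed by the finite string $M_0, M_1, \dots, M_N$. First I would introduce the predictable ``buy-low, sell-high'' strategy: using the stopping times that alternately record the epochs at which the process drops to or below $a$ and then rises to or above $b$, set $H_j = 1$ precisely on those indices $j$ lying in an excursion that has most recently visited $(-\infty, a]$ but has not yet reached $[b, \infty)$, and $H_j = 0$ otherwise. Each $H_j$ is $\F_{j-1}$-measurable with values in $\{0,1\}$, so the martingale transform $(H\cdot M)_N := \sum_{j=1}^N H_j (M_j - M_{j-1})$ is itself a martingale, null at $0$. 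By construction it satisfies the pathwise bound $(H\cdot M)_N \ge (b-a)\,U_N[a,b] - (M_N - a)^-$, since every completed upcrossing contributes at least $b-a$ while the terminal incomplete excursion costs at most $(M_N - a)^-$.

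Taking expectations and using $\E[(H\cdot M)_N] = 0$ gives Doob's upcrossing inequality,
\[
(b-a)\,\E\big[U_N[a,b]\big] \;\le\; \E\big[(M_N - a)^-\big] \;\le\; |a| + \E[\,|M_N|\,] \;\le\; |a| + C .
\]
Since $N \mapsto U_N[a,b]$ is nondecreasing, the monotone convergence theorem yields $\E\big[U_\infty[a,b]\big] \le (|a|+C)/(b-a) < \infty$ for $U_\infty[a,b] := \lim_N U_N[a,b]$; in particular $U_\infty[a,b] < \infty$ almost surely.

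Next I would pass from finiteness of upcrossings to convergence of the sample path. The event $\{\liminf_j M_j < \limsup_j M_j\}$ is contained in $\bigcup_{a<b,\ a,b\in\mathbb{Q}} \{U_\infty[a,b] = \infty\}$, a countable union of null sets, hence null. Therefore $M_\infty := \lim_{j\to\infty} M_j$ exists almost surely as an element of $[-\infty, +\infty]$. Fatou's lemma then gives $\E[\,|M_\infty|\,] = \E\big[\liminf_j |M_j|\big] \le \liminf_j \E[\,|M_j|\,] \le C < \infty$, so $M_\infty$ is integrable and, in particular, finite almost surely; redefining it to equal $0$ on the null exceptional set produces the required random variable.

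The main obstacle is the careful bookkeeping behind the upcrossing inequality: defining the stopping times and the predictable strategy $H$ precisely, verifying its predictability, and checking the pathwise lower bound on $(H\cdot M)_N$. Once that combinatorial estimate is in hand, the remaining steps — monotone convergence, countable subadditivity, and Fatou's lemma — are routine.
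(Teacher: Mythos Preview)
The paper does not actually supply its own proof of this statement; it is quoted as a classical background result with the remark ``see Theorem~3.7.3 of~\cite{bob} for a proof.'' Your argument via Doob's upcrossing inequality is the standard textbook proof and is correct as outlined, so there is nothing to compare against here.
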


When the martingale is obtained by conditioning on a fixed random variable, the martingale convergence theorem takes the following form; see Corollary~3.6.9 of Ref.~\cite{bob}.

\begin{corollary}
If $\psi$ is a random variable on the filtered probability space $(X,\F, \{\F_j\}_{j=0}^\infty, \mu)$ and satisfies $\E[\,|\psi|\,] <\infty$, then the  martingale $M_j=\E[\psi\,|\,\F_j]$ converges both almost surely and in $L^1(X,\mu)$ to $M_\infty=\E[\psi\,|\,\F_\infty]$ where $\F_\infty=\sigma\left(\bigcup_{j=0}^\infty\F_j\right)$.  If either {\em (i)} $\psi$ is $\F_\infty$-measurable, or  {\em (ii)} $\F_\infty=\F$, then $M_\infty=\psi$.  
\end{corollary}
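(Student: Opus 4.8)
The plan is to read this as the standard corollary of the $L^1$-bounded martingale convergence theorem (Theorem~\ref{classicMCT}) reinforced by uniform integrability, so the argument breaks into four steps: check that $\{M_j\}$ is an $L^1$-bounded martingale, invoke Theorem~\ref{classicMCT} for almost sure convergence, upgrade to $L^1$ convergence, and identify the limit as $\E[X\mid\F_\infty]$. First I would verify the martingale hypotheses: each $M_j=\E[X\mid\F_j]$ is $\F_j$-measurable by construction; the conditional Jensen inequality gives $|M_j|\le\E[\,|X|\mid\F_j]$, hence $\E[\,|M_j|\,]\le\E[\,|X|\,]<\infty$, so the uniform bound required by Theorem~\ref{classicMCT} holds with $C=\E[\,|X|\,]+1$; and since $\F_j\subseteq\F_{j+1}$, the tower property yields $\E[M_{j+1}\mid\F_j]=\E[X\mid\F_j]=M_j$. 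Theorem~\ref{classicMCT} then furnishes an integrable random variable $M_\infty$ with $M_j\to M_\infty$ almost surely, and since each $M_j$ is $\F_\infty$-measurable, $M_\infty$ may be taken $\F_\infty$-measurable after modification on a null set.

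Next I would show $\{M_j\}$ is uniformly integrable. Because $\{\,|M_j|>K\,\}\in\F_j$ and $|M_j|\le\E[\,|X|\mid\F_j]$, we have $\E[\,|M_j|\,\mathbf{1}_{\{|M_j|>K\}}]\le\E[\,|X|\,\mathbf{1}_{\{|M_j|>K\}}]$, while $\Prob{|M_j|>K}\le\E[\,|X|\,]/K$ uniformly in $j$ by Markov's inequality; absolute continuity of $A\mapsto\E[\,|X|\,\mathbf{1}_A]$ then drives the right-hand side to $0$ uniformly in $j$ as $K\to\infty$. Almost sure convergence together with uniform integrability gives $M_j\to M_\infty$ in $L^1(\Omega,\Pr)$ by Vitali's convergence theorem.

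Finally I would pin down $M_\infty$. For each $k$, each $A\in\F_k$, and each $j\ge k$ we have $\E[M_j\mathbf{1}_A]=\E[\E[X\mid\F_j]\mathbf{1}_A]=\E[X\mathbf{1}_A]$ since $A\in\F_j$; letting $j\to\infty$ and using the $L^1$ convergence just established, $\E[M_\infty\mathbf{1}_A]=\E[X\mathbf{1}_A]$. This identity holds on the algebra $\bigcup_k\F_k$, a $\pi$-system generating $\F_\infty$, so a Dynkin $\pi$--$\lambda$ (monotone class) argument extends it to all $A\in\F_\infty$; together with the $\F_\infty$-measurability of $M_\infty$ this says precisely that $M_\infty=\E[X\mid\F_\infty]$ almost surely. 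The special cases follow at once: if $X$ is $\F_\infty$-measurable then $\E[X\mid\F_\infty]=X$, and if $\F_\infty=\F$ then $X$ is trivially $\F_\infty$-measurable, so in either case $M_\infty=X$.

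The step I expect to require the most care is the extension of the integral identity from the generating algebra $\bigcup_k\F_k$ to all of $\F_\infty$: one must check that the class of sets on which $\E[M_\infty\mathbf{1}_A]=\E[X\mathbf{1}_A]$ holds is genuinely a $\lambda$-system, which uses the integrability of both $X$ and $M_\infty$ to handle proper differences and monotone limits. The remainder is a routine assembly of Theorem~\ref{classicMCT}, conditional Jensen, and Vitali's theorem.
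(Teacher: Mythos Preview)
Your argument is correct and is the standard proof of L\'evy's upward theorem: verify the $L^1$ bound via conditional Jensen, apply Theorem~\ref{classicMCT} for almost sure convergence, establish uniform integrability of $\{\E[X\mid\F_j]\}$ to upgrade to $L^1$ convergence via Vitali, and then identify the limit through the $\pi$--$\lambda$ extension of the integral identity from $\bigcup_k\F_k$ to $\F_\infty$. Each step is handled properly; in particular your treatment of uniform integrability and the monotone-class extension are the places where care is genuinely needed, and you have flagged and addressed them.

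As for comparison with the paper: there is nothing to compare against, because the paper does not supply its own proof of this corollary. It is quoted as a classical result with a reference (``see Corollary~3.6.9 of~\cite{bob}''), exactly as Theorem~\ref{classicMCT} itself is quoted from the same source. Your write-up therefore goes well beyond what the paper does here, and would serve as a self-contained justification should one be desired.
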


We now turn our attention to quantum conditional expectation. The following result summarizes the relevant facts from Ref.~\cite{farenick--kozdron2012} that we need; see, in particular, the proof of Theorem~III.1. 

\begin{theorem}\label{condexp}
Suppose that $(X, \borel{X},\nu)$ is a quantum probability space, and that $\psi:X \to \B(\H)_+$ is a $\nu$-integrable quantum random variable with $\QE{\nu}{\psi} \neq 0$.
If $\F(X)$ is a sub-$\sigma$-algebra of $\borel{X}$, then there exists a function $\varphi:X \to \B(\H)$ such that
{\em 
\begin{itemize}
\item[(i)] {\em $\varphi$ is $\F(X)$-measurable,}
\item[(ii)] {\em $\varphi$ is $\nu$-integrable, and }
\item[(iii)]  {\em $\QE{\nu}{\psi \ch{E}} = \QE{\nu}{\varphi \ch{E}}$
for every $E \in \F(X)$.}
\end{itemize}
}
\end{theorem}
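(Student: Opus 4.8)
The plan is to reduce the statement to the classical existence theorem for conditional expectation; the only genuinely noncommutative ingredient is a twist by the principal Radon--Nikod\'ym derivative $r:=\dfrac{\dd\nu}{\dd\mu}$, which is a $\borel{X}$-measurable positive operator-valued function (see Section~\ref{Introsect}) with $\tr r(x)=d$ for $\mu$-almost every $x$ (since $\mu=\frac1d\tr\circ\nu$).

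First I would recast condition (c) in classical terms. By Definition~\ref{nuintdefn} together with $\psi\boxtimes r=r^{1/2}\psi r^{1/2}$, for every $\nu$-integrable quantum random variable $\zeta$ and every measurable $E$ one has $\QE{\nu}{\zeta\,\ch{E}}=\int_E r^{1/2}\zeta r^{1/2}\d\mu$, so an $\F(X)$-measurable $\varphi$ satisfies (c) if and only if
\[
\QCE{\mu}{r^{1/2}\varphi r^{1/2}}{\F(X)}=\QCE{\mu}{r^{1/2}\psi r^{1/2}}{\F(X)}=:G ,
\]
where $G$ is the $\F(X)$-measurable, positive, $\mu$-integrable function obtained by applying classical conditional expectation entrywise (in a fixed basis of $\H$) to the $\mu$-integrable function $r^{1/2}\psi r^{1/2}$. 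Here the set function $\lambda(E):=\QE{\nu}{\psi\,\ch{E}}=\int_E G\d\mu$ is, on $\F(X)$, a countably additive $\BH_+$-valued measure (Theorem~\ref{DCTcor}) absolutely continuous with respect to $\mu|_{\F(X)}$ and nonzero exactly because $\QE{\nu}{\psi}\neq0$, so after dividing by $\|\lambda(X)\|$ it fits the POVM framework of Section~\ref{Introsect}. The task is thereby reduced to producing an $\F(X)$-measurable $\varphi$ with $\QCE{\mu}{r^{1/2}\varphi r^{1/2}}{\F(X)}=G$ and checking that $\varphi$ is $\nu$-integrable; property (c) then follows from the displayed equivalence, its verification being the routine chain $\tr(\rho\QE{\nu}{\varphi\,\ch{E}})=\int_E\tr(\rho\,r^{1/2}\varphi r^{1/2})\d\mu=\int_E\tr(\rho G)\d\mu=\tr(\rho\QE{\nu}{\psi\,\ch{E}})$ for all $\rho\in\SH$ and $E\in\F(X)$.

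For the construction I would use that, $r(x)$ being positive, it is invertible on its range $\ran r(x)$, so that the positive linear map $\Phi_x\colon W\mapsto r(x)^{1/2}Wr(x)^{1/2}$ is a bijection on the operators supported on $\ran r(x)$. When $\F(X)$ is generated by a countable measurable partition $\{A_i\}$, an $\F(X)$-measurable $\varphi$ is precisely a function constant on each atom $A_i$, say $\varphi\equiv V_i$ on $A_i$, and the requirement becomes $\Theta_i(V_i)=\QE{\nu}{\psi\,\ch{A_i}}$ with $\Theta_i:=\int_{A_i}\Phi_x\,\d\mu(x)$. One checks that $\Theta_i$ is a self-adjoint positive map on $\BH$ whose range contains $\QE{\nu}{\psi\,\ch{A_i}}=\int_{A_i}r^{1/2}\psi r^{1/2}\d\mu$ --- the latter is Hilbert--Schmidt-orthogonal to $\ker\Theta_i$, because $\Theta_i(W)=0$ forces $r(x)^{1/2}Wr(x)^{1/2}=0$ for $\mu$-a.e.\ $x\in A_i$ --- so $V_i:=\Theta_i^{+}\!\big(\QE{\nu}{\psi\,\ch{A_i}}\big)$, with $\Theta_i^{+}$ a generalized inverse of $\Theta_i$, solves it; the bound $\|r(x)\|\le\tr r(x)=d$ then gives $|\tr(\rho\,r^{1/2}\varphi r^{1/2})|\le d\|V_i\|$ on $A_i$, so $\varphi$ is $\nu$-integrable. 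The case of a general sub-$\sigma$-algebra is reached by a standard limiting/functional-analytic argument --- morally, by projecting $\psi$ onto the $\F(X)$-measurable functions in the $r$-twisted form $(\alpha,\beta)\mapsto\int_X\tr(\alpha\,r^{1/2}\beta r^{1/2})\d\mu$ after quotienting out its degeneracy.

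The step I expect to be the main obstacle is exactly this twist by $r$: since $\BH$ is noncommutative, classical conditioning does not commute with conjugation by $r^{1/2}$, so the naive choice of $\varphi$ as the entrywise conditional expectation of $\psi$ fails --- this is already visible when $X$ is a two-point set with $\F(X)$ trivial and $r$ non-diagonal --- and one must genuinely invert the maps $\Phi_x$ against the conditioning. This is also why $\varphi$ is determined only modulo $\bigcup_x\ker\Phi_x$, the non-uniqueness that ultimately underlies the non-identifiability of the quantum martingale limit in Section~\ref{MCTsect}. A minor secondary point --- integrability of $\varphi$ when $\psi$ is unbounded --- is handled by the trace normalization $\tr r(x)=d$.
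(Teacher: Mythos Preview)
Your reduction of (c) to the equation $\QCE{\mu}{r^{1/2}\varphi\,r^{1/2}}{\F(X)}=\QCE{\mu}{r^{1/2}\psi\,r^{1/2}}{\F(X)}$ is correct, and the atomic case is essentially right (the key claim that $\Theta_i(W)=0$ forces $r(x)^{1/2}Wr(x)^{1/2}=0$ for $\mu$-a.e.\ $x\in A_i$ does hold, because $W\mapsto r^{1/2}Wr^{1/2}$ is positive semidefinite with respect to the Hilbert--Schmidt inner product, but you should make that step explicit). The passage to a general sub-$\sigma$-algebra, however, is a genuine gap: the phrase ``standard limiting/functional-analytic argument'' hides substantial work. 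For the projection idea you sketch, the $r$-twisted form is only a semi-inner product, and after quotienting you must verify that the $\F(X)$-measurable functions map to a \emph{closed} subspace, that the projection of the class of $\psi$ lifts to an actual $\F(X)$-measurable representative, and that this representative is $\nu$-integrable (an $L^1$ condition) rather than merely square-summable in the twisted sense. Approximation by finite partitions fares no better, since the pseudoinverses $\Theta_i^{+}$ need not be uniformly bounded as the partition is refined.

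The paper, following \cite{farenick--kozdron2012}, bypasses all of this with a direct Radon--Nikod\'ym construction: set $\nu':=\nu|_{\F(X)}$, define $\tilde\nu(E):=\int_E\psi\,\dd\nu$ for $E\in\F(X)$, observe that $\tilde\nu$ is (after a harmless normalisation) a POVM on $(X,\F(X))$ with $\tilde\nu\ac\nu'$, and put $\varphi:=\dfrac{\dd\tilde\nu}{\dd\nu'}$ using the non-principal Radon--Nikod\'ym theorem recalled in Section~\ref{Introsect}. Properties (a) and (b) come packaged with that derivative, and (c) is its defining relation $\int_E\varphi\,\dd\nu'=\tilde\nu(E)=\int_E\psi\,\dd\nu$; the hypothesis $\QE{\nu}{\psi}\neq0$ is precisely what guarantees $\tilde\nu(X)\neq0$. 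No atomic reduction, no limiting argument. Your route, if the general case were completed, would give an independent proof and a sharper view of the non-uniqueness of $\varphi$ that you correctly flag at the end; as written, though, only the atomic case is established.
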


We call $\varphi$ a version of quantum conditional expectation of $\psi$ given $\F(X)$ relative to $\nu$ and write
$\varphi = \QCE{\nu}{\psi}{\F(X)}$. Moreover, if $\tilde\varphi$ is any other $\nu$-integrable $\F(X)$-measurable function satisfying 
$\QE{\nu}{\psi \ch{E}} = \QE{\nu}{\tilde \varphi \ch{E}}$ for every $E \in \F(X)$, then 
$\nu(\{x \in X : \varphi(x) \neq \tilde\varphi(x)\}) = 0$. Thus, instead of saying  ``$\varphi = \QCE{\nu}{\psi}{\F(X)}$ $\nu$-almost surely'' we identify different versions and say that $\QCE{\nu}{\psi}{\F(X)}$ is \emph{the}  quantum conditional expectation of $\psi$ given $\F(X)$ relative to $\nu$.
In fact, if $\nu'=\nu|_{\F(X)}$ is the restriction of $\nu$ to $\F(X)$, and 
$$\tilde \nu(E)  = \int_E \psi \d \nu',$$
for $E \in \F(X)$, then $\ds \varphi =  \QCE{\nu}{\psi}{\F(X)}=\frac{\dd \tilde \nu}{\dd \nu'}$.
Clearly, $\varphi:X\to\BH_+$ for $\nu'$-almost all $x\in X$.  Since $\nu'$-measure zero sets have $\nu$-measure zero, 
setting
\[
\varphi(x)= \QCE{\nu}{\psi}{\F(X)}(x)=
\begin{cases}
\ds\frac{\dd\tilde{\nu}}{\dd\nu'}(x), &\text{for } \ds\frac{\dd\tilde{\nu}}{\dd\nu'}(x)\in\BH_+,\\
0,&\textrm{otherwise},\\
\end{cases}
\]
implies $\varphi:X\to\BH_+$ for \emph{all} $x\in X$.
We are now able to prove the important tower property for quantum conditional expectation. Note that this was not considered in Ref.~\cite{farenick--kozdron2012}.  

\begin{theorem}\label{tower}
If  $\psi:X\to\B(\H)_+$ is a $\nu$-integrable quantum random variable with $\QE{\nu}{\psi}\neq0$, and $\F(X)$, $\G(X)$ are sub $\sigma$-algebras of $\borel{X}$ such that $\F(X)\subseteq\G(X)$, then
\begin{equation}\label{towerthmeq}
 \QCE{\nu}{\QCE{\nu}{\psi}{\F(X)}}{\G(X)}=\QCE{\nu}{\psi}{\F(X)}=\QCE{\nu}{\QCE{\nu}{\psi}{\G(X)}}{\F(X)}.
 \end{equation}
\end{theorem}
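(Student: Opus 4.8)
The plan is to obtain both equalities in~\eqref{towerthmeq} directly from the uniqueness clause of Theorem~\ref{condexp}: any $\nu$-integrable, $\F(X)$-measurable function $\chi$ satisfying $\QE{\nu}{\chi\ch{E}}=\QE{\nu}{\psi\ch{E}}$ for every $E\in\F(X)$ must agree with $\QCE{\nu}{\psi}{\F(X)}$ up to a set of $\nu$-measure zero, hence \emph{equals} it once we identify versions. Write $\varphi_\F=\QCE{\nu}{\psi}{\F(X)}$ and $\varphi_\G=\QCE{\nu}{\psi}{\G(X)}$; by the convention recorded just before Theorem~\ref{tower} we take each of these to be $\BH_+$-valued at every point of $X$. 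Evaluating property~(c) of Theorem~\ref{condexp} at $E=X$ (so that $\ch{X}=1$) gives $\QE{\nu}{\varphi_\F}=\QE{\nu}{\psi}=\QE{\nu}{\varphi_\G}\neq0$, which guarantees that Theorem~\ref{condexp} genuinely applies when $\varphi_\F$ or $\varphi_\G$ is conditioned a second time; note also that each $\psi\ch{E}$ is $\nu$-integrable (its $\rho$-density is an indicator times the $\rho$-density of $\psi$), so every expression $\QE{\nu}{\psi\ch{E}}$ below is meaningful.

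For the first equality, observe that since $\F(X)\subseteq\G(X)$ the function $\varphi_\F$ is already $\G(X)$-measurable, in addition to being $\nu$-integrable and $\BH_+$-valued, and it trivially satisfies $\QE{\nu}{\varphi_\F\ch{E}}=\QE{\nu}{\varphi_\F\ch{E}}$ for every $E\in\G(X)$. The uniqueness clause of Theorem~\ref{condexp} then forces $\QCE{\nu}{\varphi_\F}{\G(X)}=\varphi_\F$, which is precisely the left-hand equality $\QCE{\nu}{\QCE{\nu}{\psi}{\F(X)}}{\G(X)}=\QCE{\nu}{\psi}{\F(X)}$.

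For the second equality I would show that $\varphi_\F$ serves as a version of $\QCE{\nu}{\varphi_\G}{\F(X)}$. It is $\F(X)$-measurable and $\nu$-integrable by construction, so only property~(c) needs checking: given $E\in\F(X)$, we have $E\in\G(X)$ as well, and therefore
$$\QE{\nu}{\varphi_\G\ch{E}}=\QE{\nu}{\psi\ch{E}}=\QE{\nu}{\varphi_\F\ch{E}},$$
where the first equality is property~(c) for $\varphi_\G$ (valid since $E\in\G(X)$) and the second is property~(c) for $\varphi_\F$ (valid since $E\in\F(X)$). Uniqueness again yields $\QCE{\nu}{\varphi_\G}{\F(X)}=\varphi_\F$, the right-hand equality in~\eqref{towerthmeq}.

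The argument is essentially formal, so there is no deep obstacle; the only points demanding attention are the hypotheses of Theorem~\ref{condexp} at the two re-conditioning steps — that $\varphi_\F$ and $\varphi_\G$ remain positive (ensured by the $\BH_+$-valued convention) and that their quantum expectations remain nonzero (ensured by the $E=X$ computation) — without which ``conditioning twice'' would not even be defined. Everything else reduces to the bookkeeping of $\sigma$-algebra inclusions and the uniqueness-of-versions statement.
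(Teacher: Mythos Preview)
Your argument is correct and follows essentially the same route as the paper's proof: both use that $\varphi_\F$ is already $\G(X)$-measurable for the first equality, and for the second equality both chain the defining identity $\QE{\nu}{\varphi_\G\ch{E}}=\QE{\nu}{\psi\ch{E}}=\QE{\nu}{\varphi_\F\ch{E}}$ for $E\in\F(X)$ and invoke uniqueness of versions. You are in fact more careful than the paper in explicitly verifying the positivity and nonzero-expectation hypotheses needed to re-apply Theorem~\ref{condexp}.
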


\begin{proof}
Define $\varphi_f=\QCE{\nu}{\psi}{\F(X)}$ and $\varphi_g=\QCE{\nu}{\psi}{\G(X)}$. To prove the theorem, we will verify that
$\QCE{\nu}{\varphi_f}{\G(X)}=\varphi_f=\QCE{\nu}{\varphi_g}{\F(X)}$.
The first equality in~\eqref{towerthmeq} follows immediately from the fact that $\varphi_f$ is  $\G(X)$-measurable and  $\F(X)\subseteq\G(X)$.
As for the second equality in~\eqref{towerthmeq}, observe that  if $F\in\F(X)$ and $G\in\G(X)$, then $\QE{\nu}{\varphi_f\chi_{F}}=\QE{\nu}{\psi\chi_{F}}$ and $\QE{\nu}{\varphi_{g}\chi_{G}}=\QE{\nu}{\psi\chi_{G}}$, implying
$\QE{\nu}{\varphi_g\chi_{F}}=\QE{\nu}{\psi\chi_{F}}$. 
This, in turn, implies that 
$\QE{\nu}{\varphi_f\chi_{F}}=\QE{\nu}{\varphi_{g}\chi_{F}}$ for any $F\in\F(X)$ yielding $\varphi_{g}=\varphi_f$ as required.
\end{proof}

In analogy with the classical definition, we now state the definition of a quantum martingale.

\begin{defn}
 Let $(X,\borel{X},\nu)$ be a quantum probability space.  A sequence of quantum random variables $\{\varphi_j\}_{j=0}^\infty$ is called a quantum martingale with respect to the filtration  $\{\F_j(X)\}_{j=0}^\infty$  if
\begin{itemize}
\item[(i)] $\varphi_j$ is $\F_j(X)$-measurable for all $j$,
\item[(ii)] $\varphi_j$ is $\nu$-integrable for all $j$, and
\item[(iii)] $\QCE{\nu}{\varphi_{j+1}}{\mathcal{F}_j(X)}=\varphi_j$ for all $j$.
\end{itemize}
\end{defn}
	
It is also important to know that a quantum martingale is obtained by conditioning on a fixed quantum random variable.	
	
\begin{theorem}
If $\psi:X\to\BH_+$ is a $\nu$-integrable quantum random variable and $\QE{\nu}{\psi}\neq0$, then the sequence of $\F_j(X)$-measurable $\nu$-integrable quantum random variables $\{\varphi_j\}_{j=0}^\infty$  where $\varphi_j=\QCE{\nu}{\psi}{\F_j(X)}$ is a quantum martingale.
\end{theorem}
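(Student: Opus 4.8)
The plan is to verify directly that the sequence $\{\varphi_j\}_{j=0}^\infty$ satisfies the three defining conditions of a quantum martingale, drawing entirely on results already established. Conditions~(a) and~(b) are immediate: by construction $\varphi_j = \QCE{\nu}{\psi}{\F_j(X)}$, and Theorem~\ref{condexp} guarantees precisely that this function is $\F_j(X)$-measurable and $\nu$-integrable, provided the hypothesis $\QE{\nu}{\psi}\neq 0$ holds --- which it does by assumption. (With the normalization convention fixed in the discussion following Theorem~\ref{condexp}, each $\varphi_j$ may moreover be taken to map into $\BH_+$ on all of $X$.)

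The only substantive point is condition~(c), the martingale identity $\QCE{\nu}{\varphi_{j+1}}{\F_j(X)} = \varphi_j$. First I would check that the quantum conditional expectation appearing on the left-hand side is even well defined: taking $E = X$ in part~(c) of Theorem~\ref{condexp} gives $\QE{\nu}{\varphi_{j+1}} = \QE{\nu}{\psi \ch{X}} = \QE{\nu}{\psi}\neq 0$, so Theorem~\ref{condexp} applies to the $\BH_+$-valued quantum random variable $\varphi_{j+1}$ together with the sub-$\sigma$-algebra $\F_j(X)$. Since $\{\F_j(X)\}_{j=0}^\infty$ is a filtration we have $\F_j(X)\subseteq\F_{j+1}(X)$, so the second equality of the tower property (Theorem~\ref{tower}), applied with $\F(X) = \F_j(X)$ and $\G(X) = \F_{j+1}(X)$, yields
\[
\QCE{\nu}{\varphi_{j+1}}{\F_j(X)} = \QCE{\nu}{\QCE{\nu}{\psi}{\F_{j+1}(X)}}{\F_j(X)} = \QCE{\nu}{\psi}{\F_j(X)} = \varphi_j ,
\]
which is exactly~(c).

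There is essentially no obstacle here beyond bookkeeping: the real work was carried out in Theorem~\ref{condexp} (existence and uniqueness of quantum conditional expectation) and Theorem~\ref{tower} (the tower property), and the present statement is a direct corollary of the two. The one place requiring a moment's care --- and the only point at which the hypothesis $\QE{\nu}{\psi}\neq 0$ is used a second time --- is confirming that $\QE{\nu}{\varphi_{j+1}}\neq 0$, so that the outer conditional expectation in~(c) is itself legitimately defined; as noted, this is automatic from part~(c) of Theorem~\ref{condexp} with $E = X$.
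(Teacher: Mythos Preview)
Your proof is correct and follows essentially the same approach as the paper: verify conditions~(a) and~(b) directly from the existence theorem for quantum conditional expectation, then deduce~(c) from the tower property (Theorem~\ref{tower}) applied with $\F(X)=\F_j(X)\subseteq\F_{j+1}(X)=\G(X)$. Your version is in fact slightly more careful than the paper's, in that you explicitly check $\QE{\nu}{\varphi_{j+1}}=\QE{\nu}{\psi}\neq 0$ so that the outer conditional expectation in~(c) is well defined; the paper derives the same identity (citing an external reference) but uses it only to justify integrability rather than to legitimize the nested conditioning.
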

	
\begin{proof}
The fact that $\varphi_j$ is $\F_j(X)$-measurable follows immediately from the definition of conditional expectation. 
The fact that $\varphi_j$ is $\nu$-integrable follows since $\psi$ is $\nu$-integrable and $\QE{\nu}{\varphi_j}=\QE{\nu}{\QCE{\nu}{\psi}{\F_j(X)}}=\QE{\nu}{\psi}$; see Proposition~4.3 of Ref.~\cite{farenick--kozdron2012} for a proof of this fact. 
\color{black}
We now observe that
$$
\QCE{\nu}{\varphi_{j+1}}{\F_j(X)}=\QCE{\nu}{\QCE{\nu}{\psi}{\F_{j+1}(X)}}{\F_j(X)}\\
=\QCE{\nu}{\psi}{\F_j(X)}\\
=\varphi_j$$
where the second equality follows from the tower property, Theorem~\ref{tower}. In other words, 
$$\QCE{\nu}{\varphi_{j+1}}{\F_j(X)} = \varphi_j$$
so that we have proved that $\{\varphi_j\}_{j=0}^\infty$  is, in fact,  a quantum martingale.
\color{black}
\end{proof}
	  
\begin{theorem}[Continuity of Quantum Conditional Expectation]
Consider a quantum probability space  $(X, \borel{X},\nu)$ and suppose that $\F(X)\subseteq\borel{X}$ is a sub $\sigma$-algebra.  Suppose further that
$\{\psi_n\}_{n=1}^\infty$ is a sequence of $\nu$-integrable quantum random variables with $\psi_n:X\to\B(\H)_+$ and $\QE{\nu}{\psi_n}\neq 0$ for all $n$.  If $\psi_n$ converges ultraweakly $\mu$-almost surely to $\psi$,  
then $\QCE{\nu}{\psi_n}{\F(X)}$ converges ultraweakly $\mu$-almost surely to $\QCE{\nu}{\psi}{\F(X)}$.
\end{theorem}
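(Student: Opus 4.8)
The plan is to push the whole problem through the $\boxtimes$-transform, where the quantum conditional expectation turns into an ordinary (entrywise) scalar conditional expectation and the classical theory becomes available. Write $\nu'=\nu|_{\F(X)}$, let $\mu'=\mu|_{\F(X)}$ be the classical measure it induces on $\F(X)$, and set $g'=\frac{\dd\nu'}{\dd\mu'}$; the crucial point is that $g'$ is $\F(X)$-measurable and is the \emph{same} for every $n$. Put $\varphi_n=\QCE{\nu}{\psi_n}{\F(X)}$ and $\varphi=\QCE{\nu}{\psi}{\F(X)}$. First I would establish that, for every $\rho\in\SH$,
$$
\tr\!\left(\rho\left(\varphi_n\boxtimes\frac{\dd\nu'}{\dd\mu'}\right)\right)=\E_\mu\!\left[\,\tr\!\left(\rho\left(\psi_n\boxtimes\frac{\dd\nu'}{\dd\mu'}\right)\right)\,\Big|\,\F(X)\right]\quad\mu\text{-a.s.,}
$$
with an ordinary conditional expectation on the right-hand side. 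This comes straight from the description of quantum conditional expectation recalled after Theorem~\ref{condexp}: $\varphi_n=\frac{\dd\tilde\nu_n}{\dd\nu'}$ with $\tilde\nu_n(E)=\int_E\psi_n\d\nu'$, so unwinding both $\tilde\nu_n(E)=\int_E\varphi_n\d\nu'$ and $\tilde\nu_n(E)=\int_E\psi_n\d\nu'$ through Definition~\ref{nuintdefn} yields $\int_E\tr(\rho(\varphi_n\boxtimes\frac{\dd\nu'}{\dd\mu'}))\d\mu=\int_E\tr(\rho(\psi_n\boxtimes\frac{\dd\nu'}{\dd\mu'}))\d\mu$ for all $E\in\F(X)$; since $\varphi_n$ and $g'$ are $\F(X)$-measurable the left integrand is $\F(X)$-measurable, and this is exactly the defining identity for classical conditional expectation.

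Next I would feed in the hypothesis. Because $g'$ does not depend on $n$, the argument of Lemma~\ref{QConvCor} (applied with $\nu'$ in place of $\nu$) gives $\psi_n\boxtimes\frac{\dd\nu'}{\dd\mu'}\to\psi\boxtimes\frac{\dd\nu'}{\dd\mu'}$ ultraweakly $\mu$-a.s., i.e.\ for each $\rho$ the scalar random variables $\tr(\rho(\psi_n\boxtimes\frac{\dd\nu'}{\dd\mu'}))$ converge $\mu$-a.s.\ to $\tr(\rho(\psi\boxtimes\frac{\dd\nu'}{\dd\mu'}))$. Applying the conditional form of Lebesgue's dominated convergence theorem to these scalar functions — controlled, exactly as in Theorem~\ref{DCT}, by a $\mu$-integrable dominating random variable (which also guarantees $\psi$ is $\nu$-integrable, so that $\varphi$ is defined) — gives $\E_\mu[\tr(\rho(\psi_n\boxtimes\frac{\dd\nu'}{\dd\mu'}))\,|\,\F(X)]\to\E_\mu[\tr(\rho(\psi\boxtimes\frac{\dd\nu'}{\dd\mu'}))\,|\,\F(X)]$ $\mu$-a.s. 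Combined with the identity of the first paragraph, this says $\varphi_n\boxtimes\frac{\dd\nu'}{\dd\mu'}\to\varphi\boxtimes\frac{\dd\nu'}{\dd\mu'}$ ultraweakly $\mu$-a.s.

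Then I would pass from the transform $\varphi_n\boxtimes\frac{\dd\nu'}{\dd\mu'}=(g')^{1/2}\varphi_n(g')^{1/2}$ back to $\varphi_n$ itself. Fix $x$ and split $\H=\ker(g'(x))\oplus\ran(g'(x))$: on $\ran(g'(x))$ the operator $(g'(x))^{1/2}$ is invertible by Lemma~\ref{lem1}, so $\varphi_n(x)$ is recovered there by conjugating the convergent quantity by $(g'(x))^{-1/2}$; and one checks from the explicit non-principal Radon-Nikod\'ym formula (together with the $\B(\H)_+$-valued normalization of $\QCE{\nu}{\cdot}{\F(X)}$ fixed after Theorem~\ref{condexp}) that $\varphi_n(x)$ and $\varphi(x)$ act as $0$ on $\ker(g'(x))$, since everything there is sandwiched between powers of $g'$. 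Hence the entrywise convergence of $(g')^{1/2}\varphi_n(g')^{1/2}$ upgrades to $\varphi_n\to\varphi$ ultraweakly $\mu$-a.s., which is the assertion.

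The step I expect to be the main obstacle is the interchange of the limit with the conditional expectation in the second paragraph. Unlike quantum expectation, classical conditional expectation is only $L^1$-continuous and does \emph{not} preserve almost-sure convergence in general (take $f_n=n\ch{[0,1/n]}$ on $[0,1]$ with the trivial sub-$\sigma$-algebra), so one genuinely needs a domination hypothesis of the kind driving Theorem~\ref{DCT}, and setting up the estimate that makes this work is where the real content lies. A secondary, purely technical nuisance is the bookkeeping on $\ker(g'(x))$ required to recover $\varphi_n$ from its $\boxtimes$-transform.
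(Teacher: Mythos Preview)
Your approach is genuinely different from the paper's, and considerably more elaborate. The paper's proof is two sentences: for each $F\in\F(X)$ one has $\psi_n\ch{F}\to\psi\ch{F}$ ultraweakly $\mu$-a.s., Theorem~\ref{DCT} then gives $\QE{\nu}{\psi_n\ch{F}}\to\QE{\nu}{\psi\ch{F}}$ ultraweakly, and from this the conclusion is simply asserted. Thus the paper argues entirely at the level of the defining integrals $\QE{\nu}{\varphi_n\ch{F}}=\QE{\nu}{\psi_n\ch{F}}$ over test sets $F$ and never descends to the pointwise picture. You instead reduce to \emph{classical} conditional expectations via the $\boxtimes$-transform---this is essentially the Proposition relating $\varphi_\rho$ to $\QCE{\mu}{\psi_\rho}{\F(X)}$ which the paper states immediately after the present theorem---and then invoke the conditional form of dominated convergence. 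What your route buys is honesty about where the analysis lives: you correctly flag that a domination hypothesis is needed (the paper's appeal to Theorem~\ref{DCT} silently assumes one too, even though the statement here carries none), and you rightly observe that ``integrals over every $F\in\F(X)$ converge'' does not by itself force ``the integrands converge $\mu$-a.s.'' The costs are the extra bookkeeping---in particular your derivation writes $\int_E\psi_n\d\nu'$, which is problematic since $\psi_n$ is not $\F(X)$-measurable; the identity you want uses $\frac{\dd\nu}{\dd\mu}$ on the $\psi_n$ side and $\frac{\dd\nu'}{\dd\mu'}$ only on the $\varphi_n$ side---and the recovery step on $\ker(g'(x))$, neither of which the paper's integral-level argument has to confront.
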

	
\begin{proof}
For any $F\in\F(X)$, we know  $\psi_n\ch{F}$ converges ultraweakly $\mu$-almost surely to $\psi\ch{F}$.
Theorem~\ref{DCT} says that $\QE{\nu}{\psi_n\ch{F}}\to\QE{\nu}{\psi\ch{F}}$ ultraweakly implying that  $\QCE{\nu}{\psi_n}{\F(X)}$ converges ultraweakly $\mu$-almost surely
to $\QCE{\nu}{\psi}{\F(X)}$ as required.  
\end{proof}

\color{black}
We now mention a relationship between the quantum conditional expectation $\QCE{\nu}{\psi}{\F(X)}$  and the family of classical conditional expectations $\QCE{\mu}{\psi_\rho}{\F(X)}$ for $\rho \in \SH$ which follows from the definitions of the objects involved.  Observe that if $\psi:X\to\B(\H)_+$ is a $\nu$-integrable quantum random variable with $\QE{\nu}{\psi}\neq0$, then the following statements are equivalent.
{\em 
\begin{itemize}
\item[{\em (A)}] $\nu(\, \{x \,| \, \varphi(x)=\QCE{\nu}{\psi}{\F(X)}(x)\} \, )=1$.
\item[{\em (B)}] $\mu(\, \{x \, |\, \varphi_\rho(x)=\QCE{\mu}{\psi_\rho}{\F(X)}(x) \;\forall\rho\in\SH\} \, )=1$.
\end{itemize}
}
That is, let $\varphi=\QCE{\nu}{\psi}{\F(X)}$ so that $\varphi$ is a $\F(X)$-measurable quantum random variable with the property that  $\QE{\nu}{\varphi\ch{E}}=\QE{\nu}{\psi\ch{E}}$
for every $E\in\F(X)$. However, this holds if and only if for all $\rho \in \SH$ we have
$\tr(\rho\QE{\nu}{\varphi\ch{E}})=\tr(\rho\QE{\nu}{\psi\ch{E}})$
which in turn holds if and only if $\QE{\mu}{(\varphi\ch{E})_\rho}=\QE{\mu}{(\psi\ch{E})_\rho}$. However, 
$(\varphi\ch{E})_\rho=\varphi_\rho\ch{E}$  so that $\QE{\mu}{\varphi_\rho\ch{E}}=\QE{\mu}{\psi_\rho\ch{E}}$. Therefore,
$\varphi_\rho=\QCE{\mu}{\psi_\rho}{\F(X)}$
\color{black}

We are now in a position to prove the main result of this paper, namely a quantum martingale convergence theorem for the quantum martingale $\varphi_j=\QCE{\nu}{\psi}{\F_j(X)}$.
Although we will prove that the sequence $\{\varphi_j\}_{j=0}^\infty$ has a unique limit, in contrast to the classical situation, the value of the limiting random variable $\varphi_\infty$ cannot be determined in general. In fact, all that can be said is that $\varphi_\infty$ and $\QCE{\nu}{\psi}{\F_\infty(X)}$ differ by a quantum random variable $\Phi$ satisfying $\Phi_\rho=0$ for all $\rho\in\SH$.  

\begin{theorem}[Quantum Martingale Convergence Theorem]\label{MCT}
Consider a quantum probability space $(X,\borel{X},\nu)$ with filtration $\{\F_j(X)\}_{j=0}^\infty$ and let $\psi:X\to\BH_+$ be a $\nu$-integrable quantum random variable with $\QE{\nu}{\psi}\neq 0$.  Consider the quantum martingale $\varphi_j=\QCE{\nu}{\psi}{\F_j(X)}$.  There exists a $\nu$-integrable quantum random variable $\varphi_\infty$ such that 
{\em 
\begin{itemize}
\item[(i)] {\em $\varphi_j$ converges  ultraweakly $\mu$-almost surely to $\varphi_\infty$,}
\item[(ii)] {\em $\varphi_\infty$ is $\F_\infty(X)=\sigma\left(\bigcup_{j=0}^\infty \F_j(X)\right)$-measurable, and }
\item[(iii)] {\em $\varphi_\infty\in\{\, \QCE{\nu}{\psi}{\F_\infty}+\Phi\ |\ \Phi_\rho=0\ \forall\rho\in\SH \, \}$.}
\end{itemize}
}
Furthermore, if either 
{\em 
\begin{itemize}
\item[(iv)] {\em $\F_\infty(X)=\borel{X}$, or }
\item[(v)] {\em $\psi$ is $\F_\infty(X)$-measurable, }
\end{itemize}
}
then  $\varphi_\infty\in\{\, \psi+\Phi\ |\ \Phi_\rho=0\ \forall\rho\in\SH\, \}$.  
\end{theorem}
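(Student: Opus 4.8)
The plan is to reduce the quantum statement, coordinate by coordinate in the state $\rho$, to the classical martingale convergence theorem, and then assemble a single quantum random variable $\varphi_\infty$ from the classical limits. First I would fix an orthonormal basis of $\H$ and, for each pair of basis vectors, consider the scalar-valued processes $x \mapsto \tr(\rho \varphi_j(x))$ as $\rho$ ranges over $\SH$ (it suffices to use the finitely many matrix units, since every $\rho$ is a linear combination of them). Using the Proposition immediately preceding the theorem, each such scalar process is exactly the classical martingale $\QCE{\mu}{\psi_\rho}{\F_j(X)}$ with respect to the classical probability measure $\mu$ and the filtration $\{\F_j(X)\}$. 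Since $\psi$ is $\nu$-integrable and positive, each $\psi_\rho$ is $\mu$-integrable, so the classical $L^1$-bound needed to apply Theorem~\ref{classicMCT} holds; in fact, because $\psi$ is $\BH_+$-valued, one gets the uniform bound $\QE{\mu}{|\psi_\rho|} = \QE{\mu}{\psi_\rho} = \tr(\rho \QE{\nu}{\psi})$ for all $j$. Hence for each $\rho$ there is a $\mu$-integrable classical limit, call it $g_\rho$, with $\tr(\rho\varphi_j(x)) \to g_\rho(x)$ for $\mu$-almost all $x$, and $g_\rho = \QCE{\mu}{\psi_\rho}{\F_\infty(X)}$ by Corollary~3.6.9 of~\cite{bob}.

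Next I would build $\varphi_\infty$ from the family $\{g_\rho\}$. Taking a single $\mu$-null set off which convergence holds simultaneously for all matrix-unit choices of $\rho$ (a finite intersection), linearity of $\rho \mapsto \tr(\rho \varphi_j(x))$ forces the map $\rho \mapsto g_\rho(x)$ to be linear on $\SH$ for $\mu$-almost every $x$, hence extends to a linear functional on $\BH$, which is represented by a unique operator $\varphi_\infty(x)$ with $\tr(\rho\varphi_\infty(x)) = g_\rho(x)$; on the remaining null set set $\varphi_\infty(x)=0$. Measurability of $\varphi_\infty$ is then Lemma~\ref{QConv} applied to the ultraweakly $\mu$-almost surely convergent sequence $\{\varphi_j\}$, giving (i); and $\nu$-integrability together with $\F_\infty(X)$-measurability follows because each $g_\rho$ is $\F_\infty(X)$-measurable and $\mu$-integrable, which gives (ii). For (iii), write $\Phi = \varphi_\infty - \QCE{\nu}{\psi}{\F_\infty(X)}$; for every $\rho \in \SH$ we have $\Phi_\rho = g_\rho - \QCE{\mu}{\psi_\rho}{\F_\infty(X)}$, which is $0$ $\mu$-almost surely by the classical identification of $g_\rho$ noted above, so $\Phi_\rho = 0$ for all $\rho$, as claimed. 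Finally, for (iv) and (v): if $\F_\infty(X) = \borel{X}$ or $\psi$ is $\F_\infty(X)$-measurable, the classical corollary gives $g_\rho = \QCE{\mu}{\psi_\rho}{\F_\infty(X)} = \psi_\rho$ $\mu$-almost surely, so the same argument with $\psi$ in place of $\QCE{\nu}{\psi}{\F_\infty(X)}$ yields $\varphi_\infty \in \{\psi + \Phi : \Phi_\rho = 0 \ \forall \rho \in \SH\}$.

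The main obstacle I expect is the measurable selection / pasting step: going from the \emph{pointwise-in-$\rho$} classical limits $g_\rho$ to a genuine operator-valued function $\varphi_\infty$ with $\tr(\rho\varphi_\infty(\cdot)) = g_\rho(\cdot)$ requires handling the $\mu$-null sets carefully so that a single exceptional set works for all $\rho$ simultaneously, and requires knowing that the resulting $x \mapsto \varphi_\infty(x)$ is itself a quantum random variable. The finite-dimensionality of $\H$ is what rescues this: it suffices to control finitely many matrix units, intersect finitely many null sets, and invoke linearity, after which Lemma~\ref{QConv} delivers measurability for free. The non-uniqueness phenomenon of part (iii) is then not an obstacle to the proof but an intrinsic feature — the classical limits $g_\rho$ only pin down $\varphi_\infty$ through the functionals $\tr(\rho\,\cdot\,)$, and any $\Phi$ with $\Phi_\rho \equiv 0$ is invisible to all of them, exactly as Corollary~\ref{meanzerocor} (the failure of (A)$\implies$(D)) shows can genuinely occur.
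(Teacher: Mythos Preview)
Your overall strategy---reduce to the classical martingale convergence theorem via the family of scalar processes indexed by $\rho\in\SH$, then reassemble an operator-valued limit---is exactly the paper's approach. However, there is a genuine conflation in your proposal that breaks the argument as written.

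You work with the processes $x\mapsto\tr(\rho\,\varphi_j(x))$ and claim that the Proposition immediately preceding the theorem identifies these with the classical martingales $\QCE{\mu}{\psi_\rho}{\F_j(X)}$. But the Proposition (together with Definition~\ref{nuintdefn}) concerns the functionals
\[
\varphi_{j_\rho}(x)\;=\;\tr\!\left(\rho\left(\varphi_j\boxtimes\frac{\dd\nu}{\dd\mu}\right)(x)\right)
\;=\;\tr\!\left(\rho\left(\frac{\dd\nu}{\dd\mu}(x)\right)^{1/2}\varphi_j(x)\left(\frac{\dd\nu}{\dd\mu}(x)\right)^{1/2}\right),
\]
not $\tr(\rho\,\varphi_j(x))$. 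These two quantities coincide only when $\dfrac{\dd\nu}{\dd\mu}(x)=1$. In general $\{\tr(\rho\,\varphi_j(\cdot))\}_j$ need not be an $(\F_j,\mu)$-martingale at all: the martingale identity actually available is $\int_E\varphi_{j_\rho}\,\dd\mu=\int_E\varphi_{(j+1)_\rho}\,\dd\mu$ for $E\in\F_j(X)$, and this does not translate into $\int_E\tr(\rho\,\varphi_j)\,\dd\mu=\int_E\tr(\rho\,\varphi_{j+1})\,\dd\mu$. Consequently Theorem~\ref{classicMCT} cannot be applied to the processes you wrote down, and the pasting step that follows---building $\varphi_\infty(x)$ from the limits $g_\rho(x)=\lim_j\tr(\rho\,\varphi_j(x))$---rests on an unproven premise.

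If you correct this and run your argument with $\varphi_{j_\rho}$ in place of $\tr(\rho\,\varphi_j)$, you recover precisely the paper's proof: the paper verifies $\QE{\mu}{|\varphi_{j_\rho}|}\le\QE{\mu}{|\psi_\rho|}$, applies Theorem~\ref{classicMCT} to obtain classical limits $\tilde\varphi_{\infty_\rho}=\QCE{\mu}{\psi_\rho}{\F_\infty(X)}$, and then asserts the existence of a $\varphi_\infty$ with $\varphi_{\infty_\rho}=\tilde\varphi_{\infty_\rho}$ for all $\rho$. Your finite-dimensional matrix-unit pasting argument is a welcome elaboration of that assertion, since $\rho\mapsto\varphi_{j_\rho}(x)$ is still linear in $\rho$ for fixed $x$; but note that what it directly yields is the ultraweak $\mu$-a.s.\ limit of $\varphi_j\boxtimes\dfrac{\dd\nu}{\dd\mu}$, and the passage back to ultraweak convergence of $\varphi_j$ itself (the content of~(i)) is a step neither you nor the paper makes fully explicit---it is immediate when $\dfrac{\dd\nu}{\dd\mu}(x)$ is invertible $\mu$-a.e., but deserves comment otherwise. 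Once the notational slip is repaired, your handling of (iii), (iv), and (v) matches the paper's.
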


\begin{proof}
For every $\rho \in \SH$, since $\varphi_j$ is $\nu$-integrable it follows that $\varphi_{j_\rho}$ is $\mu$-integrable and satisfies
\[
\QE{\mu}{\,\left|\varphi_{j_\rho}\right|\,}=\QE{\mu}{\,\left|\,\QCE{\mu}{\psi_\rho}{\F_j(X)}\,\right|\,}\leq\QE{\mu}{\,\left|\psi_\rho\right|\,}
\]
for all $j$. By the martingale convergence theorem, Theorem~\ref{classicMCT}, for every $\rho \in \state{\H}$ there exists a $\mu$-integrable $\tilde{\varphi}_{\infty_\rho}$ such that 
\begin{itemize}
 \item[(i)] $\varphi_{j_\rho}$ converges to  $\tilde{\varphi}_{\infty_\rho}$ almost surely,
 \item[(ii)] $\tilde{\varphi}_{\infty_\rho}$ is $\F_\infty(X)=\sigma\left(\bigcup_{j=0}^\infty\F_j(X)\right)$-measurable, and 
 \item[(iii)] $\tilde{\varphi}_{\infty_\rho}=\QCE{\mu}{\psi_\rho}{\F_\infty(X)}$.
\end{itemize}
But this implies that  $\varphi_j$ converges  ultraweakly $\mu$-almost surely to some $\varphi_\infty$ with $\varphi_{\infty_\rho}=\tilde{\varphi}_{\infty_\rho}$ for all $\rho\in\SH$.  By the continuity of quantum expectation, it follows that $\varphi_\infty$ is $\nu$-integrable.  
Let $\tilde{\varphi}=\QCE{\nu}{\psi}{\F_\infty(X)}$ so that 
\[
\tilde{\varphi}_{\rho}=\QCE{\mu}{\psi_\rho}{\F_\infty(X)}=\tilde{\varphi}_{\infty_\rho} =\varphi_{\infty_\rho}.			
\]
However, if $\Phi$ is another $\nu$-integrable quantum random variable with $\Phi_\rho=0$ for all $\rho\in\SH$, then
$\left(\tilde{\varphi}+\Phi\right)_\rho=\tilde{\varphi}_{\rho}+\Phi_\rho=\tilde{\varphi}_{\rho}=\varphi_{\infty_\rho}$
implying
\[
\varphi_\infty\in\{\, \QCE{\nu}{\psi}{\F_\infty}+\Phi\ |\ \Phi_\rho=0\ \forall\rho\in\SH \, \}
\]
as claimed. Finally, if either $\F_\infty(X)=\borel{X}$ or $\psi$ is $\F_\infty(X)$-measurable, then $\QCE{\nu}{\psi}{\F_\infty(X)}=\psi$  so that  $\varphi_\infty\in\{\, \psi+\Phi\ |\ \Phi_\rho=0\ \forall\rho\in\SH\, \}$ as required.  
 \end{proof}

We will now study the set of possible limits from our quantum martingale convergence theorem.  

\begin{theorem}\label{MCTlim}
 Let $(X,\borel{X},\nu)$ be a quantum probability space and let $\psi:X\to\BH_+$ be a $\nu$-integrable quantum random variable. Define the set 
\[
\Gamma_{\nu,\psi}=\{\, \Psi \,|\, \Psi=\QCE{\nu}{\psi}{\F_\infty(X)}+\Phi\textrm{ with }\Phi_\rho=0\ \forall\rho\in\SH \, \}.
\]
If $\Psi_1\in\Gamma_{\nu,\psi}$ then $\Psi_2\in\Gamma_{\nu,\psi}$ if and only if
$\ds (\Psi_2-\Psi_1)\boxtimes\frac{\dd\nu}{\dd\mu}=0$.
\end{theorem}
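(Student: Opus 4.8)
The plan is to exploit the fact that membership in $\Gamma_{\nu,\psi}$ is entirely determined modulo the space $\mathcal N = \{\Phi : \Phi_\rho = 0 \ \forall \rho \in \SH\}$ of quantum random variables whose scalarizations all vanish, together with the characterization of this space provided by Corollary~\ref{meanzerocor}. Indeed, suppose $\Psi_1 \in \Gamma_{\nu,\psi}$, so $\Psi_1 = \QCE{\nu}{\psi}{\F_\infty(X)} + \Phi_1$ with $(\Phi_1)_\rho = 0$ for all $\rho \in \SH$. If $\Psi_2 \in \Gamma_{\nu,\psi}$ as well, write $\Psi_2 = \QCE{\nu}{\psi}{\F_\infty(X)} + \Phi_2$ with $(\Phi_2)_\rho = 0$ for all $\rho$; then $\Psi_2 - \Psi_1 = \Phi_2 - \Phi_1$, and $(\Psi_2 - \Psi_1)_\rho = 0$ for all $\rho \in \SH$.

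First I would observe that the condition ``$\Phi_\rho = 0$ for all $\rho \in \SH$'' is literally statement~(D) of Corollary~\ref{meanzerocor} applied to $\Phi = \Psi_2 - \Psi_1$: unwinding the definition, $\Phi_\rho(x) = \tr\!\left(\rho\left(\Phi \boxtimes \frac{\dd\nu}{\dd\mu}\right)(x)\right)$, and this vanishes for $\mu$-almost every $x$ and every $\rho \in \SH$ if and only if $\left(\Phi \boxtimes \frac{\dd\nu}{\dd\mu}\right)(x) = 0$ for $\mu$-almost all $x$ (choosing $\rho = \frac1d 1$ forces the trace of the positive-part and negative-part pieces to vanish, or more directly, an operator all of whose $\rho$-traces vanish is zero). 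Hence the forward direction is immediate: if $\Psi_2 \in \Gamma_{\nu,\psi}$ then $(\Psi_2 - \Psi_1) \boxtimes \frac{\dd\nu}{\dd\mu} = 0$.

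For the converse, suppose $(\Psi_2 - \Psi_1)\boxtimes\frac{\dd\nu}{\dd\mu} = 0$ $\mu$-almost everywhere. Set $\Phi_2 := \Phi_1 + (\Psi_2 - \Psi_1)$, so that $\Psi_2 = \QCE{\nu}{\psi}{\F_\infty(X)} + \Phi_2$. It remains to check $(\Phi_2)_\rho = 0$ for all $\rho \in \SH$; but $\left(\Phi_2 \boxtimes \frac{\dd\nu}{\dd\mu}\right)(x) = \left(\Phi_1 \boxtimes \frac{\dd\nu}{\dd\mu}\right)(x) + \left((\Psi_2 - \Psi_1)\boxtimes\frac{\dd\nu}{\dd\mu}\right)(x) = 0 + 0 = 0$ for $\mu$-almost all $x$, using linearity of $\psi \mapsto \psi \boxtimes \frac{\dd\nu}{\dd\mu}$ (which follows since $\frac{\dd\nu}{\dd\mu}(x)^{1/2}$ acts by left and right multiplication) and the hypothesis. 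Therefore $(\Phi_2)_\rho(x) = \tr\!\left(\rho \left(\Phi_2 \boxtimes \frac{\dd\nu}{\dd\mu}\right)(x)\right) = 0$ for all $\rho$, so $\Psi_2 \in \Gamma_{\nu,\psi}$.

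The only genuine point requiring care — and the place I would be most careful — is the equivalence between ``$\Phi_\rho = 0$ for all $\rho \in \SH$'' and ``$\Phi \boxtimes \frac{\dd\nu}{\dd\mu} = 0$ $\mu$-a.e.'': one must make sure this is handled for a general $\BH$-valued (not necessarily positive) $\Phi$, which is exactly the content of the implication (D)$\iff$(A) direction packaged into Corollary~\ref{meanzerocor}, together with the elementary fact that a bounded operator $T$ with $\tr(\rho T) = 0$ for every state $\rho$ must be zero (test against $\rho$ a rank-one projection onto each basis vector, and against $\frac12(\,|\xi\rangle + |\eta\rangle)(\langle\xi| + \langle\eta|)$ to recover off-diagonal entries). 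Everything else is bookkeeping with the linearity of $\boxtimes \frac{\dd\nu}{\dd\mu}$ and the definition of $\Gamma_{\nu,\psi}$.
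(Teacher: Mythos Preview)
Your argument is correct and matches the paper's short proof: both directions amount to the pointwise equivalence ``$\Phi_\rho(x)=0$ for all $\rho\in\SH$'' $\iff$ ``$(\Phi\boxtimes\frac{\dd\nu}{\dd\mu})(x)=0$'', combined with the linearity of $\psi\mapsto\psi\boxtimes\frac{\dd\nu}{\dd\mu}$. One quibble: your appeal to Corollary~\ref{meanzerocor} for an ``(D)$\iff$(A)'' package is misplaced --- that corollary explicitly shows (A)$\not\Rightarrow$(D) in general, and in any case (A) concerns $\QE{\nu}{\Phi}$, not the pointwise condition $\Phi_\rho=0$ --- but since you supply the needed elementary fact (an operator with $\tr(\rho T)=0$ for every state $\rho$ is zero) directly, this does not affect correctness.
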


\begin{proof}
Let $\Psi_1$, $\Psi_2\in\Gamma_{\nu,\psi}$ so that $\Psi_{1_\rho}=\Psi_{2_\rho}$ for all $\rho\in\SH$.  Therefore,
\[
0=\tr\left(\rho\left(\Psi_2\boxtimes\frac{\dd\nu}{\dd\mu}\right)\right)-\tr\left(\rho\left(\Psi_1\boxtimes\frac{\dd\nu}{\dd\mu}\right)\right) =\tr\left(\rho\left((\Psi_2-\Psi_1)\boxtimes\frac{\dd\nu}{\dd\mu}\right)\right).
\]
Since this equality holds for all $\rho\in\SH$, it follows that $\ds (\Psi_2-\Psi_1)\boxtimes\frac{\dd\nu}{\dd\mu}=0$ 
as required. Following the same reasoning in reverse gives the theorem.
\end{proof}

We can now use our results from Section~\ref{MeanZerosect} to study $\Gamma_{\nu,\psi}$.  We know that if $\Phi$ is a quantum random variable then $\Phi_\rho=0$ implies $\QE{\nu}{\Phi}=0$ whereas the converse is not necessarily true.  

\begin{corollary}
If 
$$\Sigma_{\nu,\psi}=\{\, \Psi \, | \, \Psi=\QCE{\nu}{\psi}{\F_\infty(X)}+\Phi,\ \QE{\nu}{\Phi}=0\, \},$$ 
then $\Gamma_{\nu,\phi}\subseteq\Sigma_{\nu,\psi}$.
\end{corollary}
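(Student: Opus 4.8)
The plan is to unwind the definitions so that the containment $\Gamma_{\nu,\psi}\subseteq\Sigma_{\nu,\psi}$ reduces to the single implication ``$\Phi_\rho=0$ for all $\rho\in\SH$ implies $\QE{\nu}{\Phi}=0$,'' which was already flagged in the sentence preceding the corollary and is essentially immediate. First I would take an arbitrary $\Psi\in\Gamma_{\nu,\psi}$ and write $\Psi=\QCE{\nu}{\psi}{\F_\infty(X)}+\Phi$, where $\Phi$ is a $\nu$-integrable quantum random variable with $\Phi_\rho=0$ for every $\rho\in\SH$. Here $\Phi_\rho(x)=\tr\!\left(\rho\left(\Phi\boxtimes\frac{\dd\nu}{\dd\mu}\right)(x)\right)$ in the notation of Definition~\ref{nuintdefn}, so the hypothesis says precisely that $\left(\Phi\boxtimes\frac{\dd\nu}{\dd\mu}\right)(x)=0$ for $\mu$-almost all $x$ — this is statement (D) of Corollary~\ref{meanzerocor} applied to $\Phi$.

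Next I would invoke the implication (D)$\implies$(A) of Corollary~\ref{meanzerocor}: since $\Phi$ is a (not necessarily positive) $\nu$-integrable quantum random variable for which (D) holds, we conclude $\QE{\nu}{\Phi}=0$. Alternatively, and even more directly, one can argue from the defining property of the integral: for every $\rho\in\SH$,
\[
\tr\!\left(\rho\,\QE{\nu}{\Phi}\right)=\int_X \Phi_\rho \d\mu = 0,
\]
and since this holds for all states $\rho$ we get $\QE{\nu}{\Phi}=0$. Either route certifies that $\Phi$ is a member of the class $\{\Phi : \QE{\nu}{\Phi}=0\}$ used in the definition of $\Sigma_{\nu,\psi}$.

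Finally I would conclude that $\Psi=\QCE{\nu}{\psi}{\F_\infty(X)}+\Phi$ with $\QE{\nu}{\Phi}=0$, which is exactly the membership condition for $\Sigma_{\nu,\psi}$; hence $\Psi\in\Sigma_{\nu,\psi}$, and since $\Psi\in\Gamma_{\nu,\psi}$ was arbitrary, $\Gamma_{\nu,\psi}\subseteq\Sigma_{\nu,\psi}$. I do not anticipate a genuine obstacle here — the only point requiring any care is the bookkeeping that $\Phi_\rho=0$ for all $\rho$ is the same as condition (D) of Corollary~\ref{meanzerocor}, and that the corollary's implications (B)$\iff$(C)$\implies$(D)$\implies$(A) were proved for arbitrary $\BH$-valued quantum random variables, not merely positive ones, so it legitimately applies to $\Phi$. (One should also note in passing, to match the remark before the corollary, that the reverse inclusion fails in general, since $\QE{\nu}{\Phi}=0$ does not force $\Phi_\rho=0$; the random variable $\psi_1$ from the proof of Corollary~\ref{meanzerocor} already witnesses this.)
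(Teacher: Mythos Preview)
Your proposal is correct and follows essentially the same approach as the paper: take $\Psi\in\Gamma_{\nu,\psi}$, write $\Psi=\QCE{\nu}{\psi}{\F_\infty(X)}+\Phi$ with $\Phi_\rho=0$ for all $\rho\in\SH$, deduce $\QE{\nu}{\Phi}=0$, and conclude $\Psi\in\Sigma_{\nu,\psi}$. The paper simply cites ``the earlier remark'' for the implication $\Phi_\rho=0\Rightarrow\QE{\nu}{\Phi}=0$, whereas you spell this out via condition~(D)$\implies$(A) of Corollary~\ref{meanzerocor} (and equivalently via the defining property of the integral), so your argument is the same route with more detail filled in.
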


\begin{proof}
Suppose that $\Psi\in\Gamma_{\nu,\psi}$.  Then $\Psi=\QCE{\nu}{\psi}{\F_\infty(X)}+\Phi$ where $\Phi_\rho=0$ for all $\rho\in\SH$.  Then by the earlier remark, it follows that $\QE{\nu}{\Phi}=0$, so that $\Psi=\QCE{\nu}{\psi}{\F_\infty(X)}+\Phi$ with $\QE{\nu}{\Phi}=0$.  Hence $\Psi\in\Sigma_{\nu,\psi}$ and $\Gamma_{\nu,\psi}\subseteq\Sigma_{\nu,\psi}$ as required.  
\end{proof}

%
\section*{Acknowledgements}
%

\color{black}
Much of this research was done by the first author in his master's thesis~\cite{kylerthesis} under the supervision of the second author. The work of the second author is supported, in part, by the Natural Sciences and Engineering Research Council of Canada. The second author also thanks the Isaac Newton Institute for Mathematical Sciences, Cambridge, for its hospitality during the Random Geometry programme where the initial writing of this paper was done. Finally, special thanks are due to both Doug Farenick and Sarah Plosker for many valuable discussions about this, and related, material.
\color{black}


\end{document}